\newtheorem*{theorem*}{Theorem}
\theoremstyle{definition}
\newtheorem*{principle*}{Principle}
\newcommand{\da}{^{\dagger}}
\newcommand{\id}{\mathbb{I}}
\newcommand{\ids}{\mathbbm{1}}
\newcommand{\tr}{\mathrm{Tr}} 
\newcommand{\Tr}[1]{\mathrm{Tr}\left[ #1\right]}
\newcommand{\ketbra}[2]{\ket{#1} \!\! \bra{#2}}
\NewDocumentCommand{\mref}{m}{\quinn_mref:n {#1}}
\definecolor{ms}{rgb}{0,.4,1}
\definecolor{js}{rgb}{0.578125,0.23828125,0.9609375}
\newcommand{\ms}[1]{#1}
\newcommand{\js}[1]{#1}
\newcommand{\sqrbra}[1]{\left[ #1\right]}
\newcommand{\de}{{\rm d}}
\newcommand{\J}{\mathbb{J}}
\theoremstyle{definition}
\newtheorem*{observation*}{Observation}
\newcommand{\cJ}{\mathcal{J}}
\newcommand{\R}{\mathcal{R}}
\newcommand{\norbra}[1]{\left( #1\right)}
\DeclareMathOperator*{\argmin}{arg\,min}
\def\@fnsymbol#1{\ensuremath{\ifcase#1\or \dagger\or *\or \ddagger\or
   \mathsection\or \mathparagraph\or \|\or **\or \dagger\dagger
   \or \ddagger\ddagger \else\@ctrerr\fi}}
\begin{document}
	\title{State retrieval beyond Bayes' retrodiction}
	\author{Jacopo Surace}\email{jacopo.surace@icfo.eu}
	\affiliation{ICFO - Institut de Ciencies Fotoniques, The Barcelona Institute of\\
		Science and Technology, Castelldefels (Barcelona), 08860, Spain}
	\author{Matteo Scandi}\email{matteo.scandi@icfo.eu}
	
	\affiliation{ICFO - Institut de Ciencies Fotoniques, The Barcelona Institute of\\
		Science and Technology, Castelldefels (Barcelona), 08860, Spain}
	\begin{abstract}
	
				 In the context of irreversible dynamics, the meaning of the reverse of a physical evolution can be quite ambiguous. It is a standard choice to define the reverse process using Bayes' theorem, but, in general, this is not optimal with respect to the relative entropy of recovery. In this work we explore whether it is possible to characterise an optimal reverse map building from the concept of state retrieval maps. In doing so, we propose a set of  principles that state retrieval maps should satisfy. We find out that the Bayes inspired reverse is just one case in a whole class of possible choices, which can be optimised to give a map retrieving the initial state more precisely than the Bayes rule. Our analysis has the advantage of naturally extending to the quantum regime. In fact, we find a class of reverse transformations containing the Petz recovery map as a particular case, corroborating its interpretation as a quantum analogue of the Bayes retrieval.
				Finally, we present numerical evidence showing that by adding a single extra principle one can isolate for classical dynamics the usual reverse process derived from Bayes' theorem.

	\end{abstract}
	\maketitle
	
	\section{Introduction }

	Reversible transformation of a physical system are bijective mapping between input and outputs. They are called reversible when a well defined notion of reverse operation exists, the latter of which involves the inversion of the direction of the element-wise mapping from the space of the outputs to the space of inputs. \js{Reversible quantum channels are unitary channels, while reversible classical stochastic processes are permutations.}
		
	Whenever the bijectivity between the space of inputs and outputs is lost, the standard definition of reverse operation no longer applies and one is forced to define a notion of generalised reversion.
	
	To this end, an illuminating approach is adopting a statistician's perspective and associating reverse processes with the process of retrodiction.
	It has been shown in~\cite{watanabe1955,watanabe1965,buscemi2021,aw2021,crooks2008quantum} that the common method for defining a generalised reverse map is analogous to the operation of retrodiction based on Bayes' theorem. In particular, considering the left-stochastic matrix $\Phi$ as the conditional probability $\varphi(i|j)=\Phi_{i,j}$ of obtaining the micro-state $i$ from the micro-state $j$, the Bayes inspired reverse map $\tilde{\Phi}_B$ is defined in coordinates as:
	\begin{equation}
		\label{eq:BayesInversion}
		(\tilde{\Phi}_B)_{i,j}=\frac{\Phi_{j, i}\,\pi_i}{(\Phi(\pi))_j},
	\end{equation}
	where $\pi$ is a fiducial state that is perfectly retrieved, called prior in Bayesian inference.
	
	Even though the choice of this specific reverse map can be thoroughly justified in the context of classical Bayesian inference ~\cite{jaynes2003,bernardo2009}, as we will see, it is just one of the many different reasonable reverse maps. Furthermore the notorious difficulty of extending the Bayes rule to quantum systems~\cite{asano2012,dezert2018,parzygnat2020,vanslette2017,warmuth2014,vanslette2018,holik2013,fuchs2009,Giffin07,Ali12,kostecki2014,Accardi78,Accardi82,Leifer06,coecke12}, 
	together with the partial arbitrariness of this choice, makes the characterisation of a quantum reverse maps even more questionable. In the quantum scenario, what can arguably be called the standard reverse map is the Petz recovery map~\cite{ohya1993}. This map has been introduced in relation to its properties in the context of the data processing inequality~\cite{petz1986,Petz:1988usv,Petz:2002eql,junge2018} but, again, it was shown in~\cite{buscemi2021,aw2021} that the Petz map can be regarded as one of the possible extensions of the Bayes inspired reverse map in the quantum context, and that it is a fundamental tool in deriving fluctuation theorems ~\cite{buscemi2021,aw2021}. 

In the following we tackle the problem of the arbitrariness in the choice of a generalised reverse map, introducing a definition of the class of state retrieval maps based on a set of physical desiderata. We will differentiate between state retrieval maps and reverse (or retrodiction) maps, considering reverse maps as state retrieval maps with the additional property of being involutive. We show that by choosing a maximisation principle we can single out a unique optimal map that outperforms the Bayes retrodiction in the task of state retrieval. 
The advantage of this construction is that, being based on a set of physical principles, it can be naturally extended to the quantum regime, partly overcoming the difficulty in directly extending the Bayes rule. Even in the quantum case, using an analogous maximisation principle, we show that in the considered example the optimal map outperforms the Petz map.

Finally, we present  numerical evidence suggesting that \ms{by adding a desideratum, namely that reversing an evolution should be involutive (i.e., the reverse of the reverse is the forward map)}, it is possible to single out a unique map coinciding with the  Bayes inspired retrieval map \ms{for classical stochastic maps}. 

\subsection{Rationale}

Before entering the technical details, it is important to explain the intuition that guides our construction. The main aim of this work is to define a physical process that can recover the initial conditions of a dissipative dynamics $\Phi$ as accurately as possible. For unitaries there is a unambiguous choice given by $\Phi^{-1}$, but this is not well defined for general processes, as the inverse of a dissipative evolution is unphysical. For this reason, we put forward a construction of the main desiderata for a generic inverse $\tilde{\Phi}$. 

First, it should be noticed that any physically realisable process with domain equal to codomain has at least one fixed point. This gives us the freedom to encode any additional information on the initial conditions into the fixed point of $\tilde{\Phi}\Phi$, i.e., one can always choose without loss of any generality a state $\pi$ that will be perfectly recovered. 

Moreover, we impose that the statistics of $\tilde{\Phi}\Phi$ are as time symmetric as possible. This requirement takes the form of detailed balance condition on the transition rates, which is the canonical method of enforcing time symmetry in dissipative dynamics. It is an easy exercise to prove that the only evolution which is detailed balance with respect to every state is the identity map, which in this case corresponds to the undesirable  $\tilde{\Phi}=\Phi^{-1}$. Hence, in order not to lose generality, we require detailed balance with respect to a single state. It is a standard result that if a map satisfies detailed balance with respect to a state, then this is also a fixed point of the evolution. For this reason, choosing any other state than the $\pi$ defined above would introduce some extra information about the initial conditions, namely that this additional state should perfectly retrieved. For this reason, we require the same state $\pi$ to be the one with respect to which $\tilde{\Phi}\Phi$ is time symmetric.

The last requirement is geometrical in nature. It should be noticed that any rotation in the image of $\tilde{\Phi}\Phi$ can only decrease the quality of the retrieval in a trivial manner: in fact, any rotation of the image can be undone by simply rotating back at the end of the protocol. For this reason, without loss of generality one can assume that the image of $\tilde{\Phi}\Phi$ has the same orientation as the original space of states. This intuitive argument is mathematically encoded by principle~\ref{ax:5}.

We show that the family of possible retrieval maps $\tilde{\Phi}$ satisfying the principles above is actually a convex space. To benchmark our construction, then, we also verify that the most common choice of state retrieval, i.e., the one coming from Bayes' retrodiction, is indeed contained in the set that we defined. Still it should be kept in mind that this work is not primarily interested in reconstructing the Bayes inversion (a topic covered in Sec.~\ref{sec:Bayes-rev}), but rather in exploring the intuitive definition of retrieval maps. 

Lastly, the principle we choose to single out the optimal map is also geometrical. In particular, we start from the consideration that any physical map compresses the space of states. This implies the existence of some states outside of the image of $\tilde{\Phi}\Phi$  that simply can't be retrieved. Following this intuition, we assess the quality of a retrieval map $\tilde{\Phi}$ by  how big the volume of the image of $\tilde{\Phi}\Phi$ is, or, dually, by how small the volume of inaccessible states is. Then, the optimal map should be the one maximising this volume (see Fig.~\ref{fig:Traslazione-Composed} for an illustrative example). 

The discussion above leaves open the question of how to measure volumes in the phase space. Our choice is to look at the determinant of $\tilde{\Phi}\Phi$. This is motivated by the following two reasons: first, since $\tilde{\Phi}\Phi$ is a linear map, a standard result from linear algebra tells us that the Euclidean volume of its image is given by the determinant, so our choice aligns with the canonical treatment. Secondly, the determinant can be efficiently optimised through convex optimisation. This last property is particularly desirable when taking into consideration applications to concrete physical problems. 

	\section{Characterisation of state retrieval}
	
	
	\subsection{General requirements}
	
	In order to characterise which maps can be useful as state retrieval, we put forward some minimal desiderata that they should satisfy. 
	
	\ms{Suppose one wants to revert a map $\Phi$ given some information on the initial conditions of the system encoded by a fiducial state $\pi$, called the \emph{prior}. The first principle we define is that the state retrieval should be physically implementable, which mathematically corresponds to:}
		
	\begin{enumerate}
		\item The state retrieval map $\tilde \Phi$ is described by a left stochastic matrix; \label{ax:1}
	\end{enumerate}
	
	\ms{Notice that this requirement prevents one from setting $\tilde{\Phi}=\Phi^{-1}$: in fact, for dissipative evolutions $\Phi^{-1}$ is not a stochastic matrix, so it cannot be physically realised, as it would send general states to something that is not a probability vector. Still, for reversible transformations (i.e., for permutations) $\Phi^{-1}$ is indeed a stochastic map which perfectly recovers the input of $\Phi$. Since this choice is obviously optimal, we also require that:}
	\begin{enumerate}[resume]
	\item \ms{If the map $\Phi^{-1}$ exists and it is left stochastic, the state retrieval $\tilde \Phi$ coincides with it,} \label{ax:2}
	\end{enumerate}
	\ms{that is, whenever it is possible, we should set $\tilde{\Phi}=\Phi^{-1}$.
	
	Notice that since both $\Phi$ and $\tilde \Phi$ are stochastic, this also holds for their composition $\tilde{\Phi}\Phi$. This implies that the composite map has at least one probability vector associated to the unitary eigenvalue, corresponding to the state that is perfectly recovered by the retrieval map. Thanks to this fact, we can encode the information about the initial conditions in it, that is the prior state $\pi$ should correspond to an eigenvector of the composite evolution with eigenvalue one. Hence, the third requirement is:}
	\begin{enumerate}[resume]
		\item The prior state is one of the perfectly retrieved states: $\tilde\Phi (\Phi(\pi))  =  \pi$.  \label{ax:3}
	\end{enumerate}

	
	Finally, \ms{we do not only require that $\pi$ is one equilibrium state of the dynamics $\tilde\Phi \Phi$, but also that this is detailed balanced with respect to it. This can be expressed in coordinates as 
		\begin{align}
			(\tilde\Phi\Phi)_{j,i} \, \pi_i  = (\tilde\Phi\Phi)_{i,j} \, \pi_j,\label{app:eq:detailed balance1}
		\end{align}
	and it corresponds to the requirement of time symmetric dynamics in  the associated Markov chain. This request can be interpreted as follows: since $\tilde \Phi \Phi$ corresponds to an evolution forth-and-back, its statistics should not distinguish between the two directions of time. By this we mean that the probability of measuring the microstate $i$ at the beginning and evolving to $j$ should be the same as the one of first measuring $j$ and ending up in the $i$-th state. Unfortunately, we cannot impose such a strong requirement for all states, as it would lead to the unphysical $\Phi^{-1}$. For this reason, we limit ourselves to imposing time symmetry in the rates of the dynamics with respect to the prior state, as expressed in Eq.~\eqref{app:eq:detailed balance1}:
	}
	
	
	\begin{enumerate}[resume]
		\item \ms{The evolution $\tilde\Phi \Phi$ satisfies detailed balance with respect to $\pi$. } \label{ax:4}
	\end{enumerate}
	
	In order to explore which maps can be considered as possible candidates for a state retrieval, it is first useful to introduce a particular parametrisation of stochastic maps. This is the subject of the next section.
	\color{black}
	
	\subsection{Parametrisation of stochastic maps with a given transition}
	
	Consider the family of stochastic maps $\Psi$ with fixed transition $\Psi(\pi)=\sigma$, where both $\pi$ and $\sigma$ are probability vectors with strictly positive entries\footnote{As a standard approach, in the classical case we are not going to consider the case of probability vectors with zero entries as well as in the quantum case we are not going to consider rank-deficient density matrices.  These are special cases that in general inference studies are treated separately, taking care of the possible zeros appearing at the denominators.  In the field of Bayesian inference, for example, techniques used to deal with these scenarios are often referred as techniques to solve the \textit{zero frequency problem}. Moreover, if we assume that all the states are defined on a space of fixed dimension, since zero frequency vectors are always $\varepsilon$-close to a full rank one, one could also argue that given any finite precision in the experiment it is impossible to certify them. Thus, without loss of generality this pathological case can be neglected.}, and of the same dimension. These maps can be rewritten as:
	\begin{align}\label{eq:parametrisation}
		\Psi = \Lambda^\Psi \cJ_{\pi}^{-1},
	\end{align}
	where $\cJ_{\pi}$ is a diagonal matrix with entries $(\cJ_{\pi})_{i,i} := (\pi)_i$, and $\Lambda^\Psi$ is implicitly defined by the equation $\Lambda^\Psi:= \Psi \cJ_{\pi}$. This matrix satisfies the following two conditions. From the request that $\Psi$ is stochastic one can deduce that:
	\begin{align}
		\sum_{i} \,\Lambda^\Psi_{i,j} = \sum_{i} \,\Psi_{i,j}\,(\cJ_{\pi})_{j,j} = \pi_j.\label{eq:rowCond}
	\end{align}
	Moreover, since the transition $\Psi(\pi)=\sigma$ is specified,  $\Lambda^\Psi$ also satisfies:
	\begin{align}
		\sum_{j} \Lambda^\Psi_{i,j} = \sum_{j} \,\Lambda^\Psi_{i,j} \,(\cJ_{\pi}^{-1})_{j,j} (\pi)_j = \sigma_i .\label{eq:colCond}
	\end{align}
	This means that any stochastic map $\Psi$ with fixed transition $\Psi(\pi)=\sigma$ is uniquely identified by an element $\Lambda^\Psi$ of $\mathcal{U}(\sigma,\pi)$, the space of matrices with non-negative entries, with columns summing to $\sigma$ and rows summing to $\pi$. Interestingly, $\mathcal{U}(\sigma, \pi)$ is a convex polytope with finite number of vertices, denoted by $V_{\sigma|\pi}^{(k)}$~\cite{Jurkat1967} \ms{and indexed by $k$}. Moreover, since the matrix transpose exchanges the role of Eq.~\eqref{eq:rowCond} and Eq.~\eqref{eq:colCond}, the vertices of $\mathcal{U}(\sigma, \pi)$ and the one of $\mathcal{U}(\pi, \sigma)$ are in a one-to-one correspondence through the transformation $(V_{\sigma|\pi}^{(k)})^T = V_{\pi|\sigma}^{(k)} $.
	
	Putting everything together, we can then parametrise the matrix $\Psi$ as:
	\begin{align}
		\Psi = \sum_k\, \lambda_k^{(\Psi)} \,V_{\sigma|\pi}^{(k)}\cJ_{\pi}^{-1},\label{eq:generalParametrisation}
	\end{align}
	where $\{\lambda_k^{(\Psi)}\}$ are positive coefficients summing up to one.  
	\js{We note that the convex polytope  $\mathcal{U}(\sigma, \pi)$  is not in general a simplex. Thus,
an arbitrary element inside it can be parametrised by more than one convex combination of the vertices.
	Nevertheless, }this parametrisation gives a way to uniquely identify a map through a set of coefficients vector $\{\lambda^{(\Psi)}\}$ and the ordered pair of states $(\sigma,\pi)$ of the fixed transition.

	\subsection{Parametrisation of state retrieval maps}
	The parametrisation just presented can be used to easily enumerate all the possible retrieval maps. First, it should be noticed that the transformation $\Phi$ maps the prior state $\pi$ into $\Phi(\pi)$, meaning that it can be characterised by the vector of scalar coefficients $\{\lambda_k^{({\Phi})}\}$  in the following way 
	\begin{align}
		\Phi = \sum_k \,\lambda^{(\Phi)}_k\,V_{\Phi\pi|\pi}^{(k)} \cJ_\pi^{-1} ,\label{eq:classicalMap}
	\end{align}
	where $V_{\Phi\pi|\pi}^{(k)}$ are vertices of $\mathcal{U}(\Phi\pi,\pi)$. In the same spirit, since requirements~(\ref{ax:2},~\ref{ax:3}) impose that the retrieval map $\tilde{\Phi}$ is a left stochastic matrix with the  fixed transition $\tilde{\Phi}(\Phi(\pi))=\pi$, one can parametrise it as
	\begin{equation}
		\tilde{\Phi} =\sum_k \,\lambda^{(\tilde{\Phi})}_k\,V_{\pi|\Phi\pi}^{(k)} \cJ_{\Phi\pi}^{-1},
	\end{equation}
	where, in this case, $V_{\Phi\pi|\pi}^{(k)}$ are the vertices of $\mathcal{U}(\pi,\Phi\pi)$.  Thanks to the relation between  $\mathcal{U\textsl{}}(\Phi\pi, \pi)$ and $\mathcal{U}(\pi, \Phi\pi)$ the vertices in the two cases are connected by the transposition $(V_{\Phi\pi|\pi}^{(k)})^T = V_{\pi|\Phi\pi}^{(k)} $. For this reason we can focus solely on the coefficients vector, and associate to each state retrieval a transformation $\R$ that maps the coefficients vector $\{\lambda_k^{({\Phi})}\}$ to the coefficients vector $\{\lambda_k^{({\tilde\Phi})}\}$\footnote{\ms{To be more precise, a state retrieval map which sends $\Phi$ to $\tilde \Phi$ is defined on the quotient space $U(\Phi\pi,\pi)$, whose elements are equivalence classes of coefficients $[\{\lambda_k^{({\Phi})}\}]$, defined by the relation that two points are part of the same  equivalence class if they induce the same map on $\mathcal{U}(\Phi\pi,\pi)$ (and similarly for the image space $U(\pi,\Phi\pi)$). When passing to the original space of probability distributions $\{\lambda_k^{({\Phi})}\}$, the relation between state retrieval maps and the corresponding $\mathcal{R}$ is no longer one-to-one in general, but rather one-to-many. In particular, $\mathcal{R}$ should satisfy the implicit request of having a well-defined projection on the quotient space, namely, the corresponding state retrieval.}}.
	
	In the following sections we explore two possibilities for $\R$, one associated with the Bayes inspired reverse, the other with what we call the optimal state retrieval.
	
	\begin{figure*}[t!]
		\centering
		\includegraphics[width=1.\linewidth]{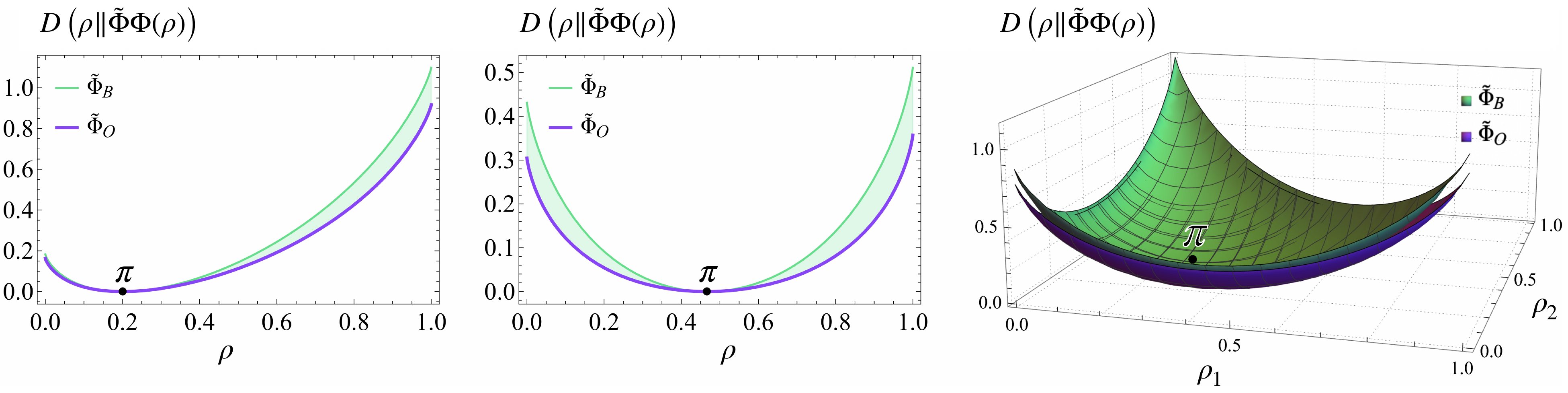}
		\caption{Relative entropy between a distribution and its evolution forwards and backwards. In  the first two plots we consider probability vectors $\rho=[\rho,1-\rho]$ of  length two, while in the third plot we consider probability vectors $\rho=[\rho_1,\rho_2,1-\rho_1-\rho_2]$ of length three. In all of the plots the map $\Phi$ and the prior distribution $\pi$ are chosen at random. As it can be seen, the optimal map $\tilde\Phi_O$ outperforms the Bayes retrodiction $\tilde\Phi_B$ in retrieving the original distribution in the whole space.}
		\label{fig:figcomposed}
	\end{figure*}
	
	\subsection{Bayes inspired reverse}\label{sec:bayesreverse}
	The Bayes inspired reverse defined in Eq.~\eqref{eq:BayesInversion} satisfies the desiderata~(\ref{ax:1}-\ref{ax:4}), so it is a legitimate state retrieval. Moreover, 
	it is a surprising fact that it corresponds to a particularly simple transformation of the coefficient vector $\{\lambda_k^{({\Phi})}\}$. In fact, by rewriting Eq.~\eqref{eq:BayesInversion} in matrix form one obtains:
	\begin{align}\label{eq:bayesDerivation}
		\tilde\Phi_B& = \cJ_\pi\,\Phi^T \cJ_{\Phi\pi}^{-1} = \sum_k \,\lambda^{(\Phi)}_k\,(V_{\Phi\pi|\pi}^{(k)})^T \cJ_{\Phi\pi}^{-1}=\\
		&=\sum_k \,\lambda^{(\Phi)}_k\,V_{\pi|\Phi\pi}^{(k)} \cJ_{\Phi\pi}^{-1}. \label{eq:baDecomposition}
	\end{align}
	Hence, in this case $\R$ corresponds to the identity transformation $\R[\{\lambda_k^{({\Phi})}\}]_i := \lambda_i^{({\Phi})}$.

	\subsection{Optimal state retrieval}
	
	Principles~(\ref{ax:1}-\ref{ax:4}) do not select a unique retrieval map, but rather a whole family of transformations. After specifying one more requirement, we provide a maximisation principle that singles out a unique optimal state retrieval map $\tilde{\Phi}_O$.
	
	To this end, consider a stochastic map from a space into itself. These types of maps are contracting: the volume of their image will be smaller than the one of their domain. The composite transformation $\tilde\Phi\Phi$ falls into this class. Intuitively, it can be argued that the optimal state retrieval should maximise the volume of the image of $\tilde\Phi\Phi$.
	
	Similar considerations lead us to impose one more requirement on $\tilde\Phi$. Notice, in fact, that any negative or complex eigenvalue in the spectrum of $\tilde\Phi\Phi$ corresponds to a reflection or a rotation of the domain, which would increase the statistical distance between a state and its evolved version. For this reason, we impose the principle:
	\begin{enumerate}[resume]
		\item The map $\tilde{\Phi}$ is a state retrieval map if all the eigenvalues of $\tilde\Phi\,\Phi$ are non-negative.  \label{ax:5}
	\end{enumerate}
	
	It should be noticed that \ms{the} Bayes inspired reverse still falls in this class of transformations. In fact, one can rewrite $\tilde \Phi_B\Phi $ as:
	\begin{align}
		&\tilde \Phi_B \Phi = \cJ_\pi\,\Phi^T \,\cJ_{\Phi\pi}^{-1} \,\Phi =\\
		&= \cJ_\pi\sqrbra{ (\cJ_{\Phi\pi}^{-1/2} \,\Phi )^T\, (\cJ_{\Phi\pi}^{-1/2} \,\Phi\,) }.
	\end{align}
	Both the matrix in the square parenthesis and $\cJ_\pi$ are positive semidefinite. The product of two positive semidefinite matrices has positive spectrum, so \ms{the} Bayes inspired reverse satisfies principle~(\ref{ax:5})\ms{.}
	
	Despite the fact that this requirement might appear to be a strong restriction on the class of possible maps, it is still not sufficient to single out a unique transformation. For this reason, we define the optimal retrieval map by the following:
	
	\begin{principle*}
		The optimal retrieval map is defined to be the $\tilde{\Phi}_O$ that maximises the determinant of $\tilde{\Phi}_O\,\Phi$ under the constraints (\ref{ax:1}-\ref{ax:5}).
	\end{principle*}
	In this case then the transformation $\R$ assigns to $\{\lambda_k^{({\Phi})}\}$ the vector $\{\lambda_k^{(\tilde\Phi_O)}\}$ corresponding to the solution of the maximisation problem
	\begin{align}
		\max_{\substack{\tilde \Phi \,{\rm state\, retrieval}}} \;\; \det \tilde \Phi \Phi .
	\end{align}
	In section~\ref{sec:optimalStateRetrieval} we provide an efficient algorithm to construct $\R$, which also proves the uniqueness of the solution. Before passing to that, we provide in the next section an analytic justification to the principle just presented.
	

	\subsection{Quality of the retrieval}
	
	Beyond the intuitive necessity of having the image of the retrieval map as big as possible,  the principle of the determinant maximisation can be justified more rigorously. We present here \ms{some} arguments \ms{explaining} why the optimal retrieval map should be the one that maximises the determinant.
	
	Consider, as a first example, the average relative entropy between the original distribution and the one evolved forward and back:
	\begin{align}\label{eq:averageRelativeEntropy}
		\int_{\mathcal{S}} \de \rho \; \ms{D(\rho || \tilde\Phi\Phi ( \rho))} = \int_{\mathcal{S}} \de \rho \; \rho\cdot (\log \rho - \log \tilde\Phi\Phi ( \rho)),
	\end{align}
	where we indicate by $\mathcal{S}$ the space of states. Thanks to the properties of the relative entropy, this average is always non-negative, while it is zero if and only if $\tilde\Phi\Phi ( \rho) = \rho$ for every $\rho$, implying that $\tilde\Phi\Phi \equiv \id$. We show in Appendix~\ref{app:relativeEntropyEstimate} that \ms{for invertible $\tilde\Phi\Phi $ (notice that non-invertible maps only have measure zero, as they are not stable under any arbitrarily small perturbations)} this quantity satisfies the inequality:
	\begin{align}\label{eq:dEstimate}
		0\leq	\int_{\mathcal{S}} \de \rho \; \ms{D(\rho || \tilde\Phi\Phi ( \rho))} \leq\frac{K}{|\det \tilde{\Phi}\Phi|}-\langle S(\rho)\rangle,\
	\end{align}
	where $K$ is a numerical constant independent of $\tilde \Phi$ and $\langle S(\rho)\rangle$ is the average Shannon entropy. This chain of inequalities gives an idea about why maximising the determinant also minimises the average relative entropy between the initial state and the retrieved one.
	
	A more precise argument follows from the observation that in order to optimise the quality of the retrieval we have to make $\tilde{\Phi}\Phi$ as similar as possible to the identity transformation. Since both $\tilde{\Phi}\Phi$ and $\id$ are positive semidefinite matrices, the relative entropy between the two is well defined and takes the form:
	\begin{align}
		D(\id||\tilde{\Phi}\Phi) &= \Tr{\id\, (\log \id - \log\tilde{\Phi}\Phi ) } =\\&
		= -\Tr{ \log\tilde{\Phi}\Phi } = \log \det (\tilde{\Phi}\Phi )^{-1},\label{eq:relEntropyIdFB}
	\end{align}
	where we used the well known matrix identity $\Tr{\log A} = \log \det A$. Minimising this relative entropy is then equivalent to the maximisation of the determinant of $\tilde{\Phi}\Phi$. This argument gives a theoretical foundation to the optimisation principle stated in the previous section.
	
	\ms{
	Moreover, we also show in Appendix~\ref{app:relEntropyEst2} that the determinant bounds the ability of retrieving any state close to the prior. In formulae,  this reads:
	\begin{align}
		D(\pi +\delta\rho ||\tilde\Phi\Phi(\pi +&\delta\rho ))\leq \nonumber\\
		\leq&\;\frac{D(\pi ||\pi+\delta\rho)}{2} \;\log\det(\tilde\Phi\Phi)^{-1},\label{eq:19}
	\end{align}
	where $\delta \rho$ is an arbitrary perturbation of the prior state such that $|\delta\rho|\ll 1$, and the inequality holds up to order $\mathcal{O}\norbra{|\delta\rho|^2}$.
	
	Finally, using similar arguments, we are also able to prove the following inequality (Appendix~\ref{app:relEntropyEst2}):
	\begin{align}
		\inf_{\rho,\sigma} \,&\norbra{\frac{D (\rho ||\sigma)  - D(\tilde\Phi\Phi(\rho)||\tilde\Phi\Phi(\sigma))}{D(\rho||	\sigma)}}\leq\nonumber\\
		&\qquad\qquad\qquad\qquad\qquad\leq2\log\det(\tilde\Phi\Phi)^{-1}.\label{eq:20}
	\end{align}
	In this way, the determinant can also be used to bound the maximum rate at which any two states become indistinguishable (the quantity in Eq.~\eqref{eq:20}). This is a well known quantifier of how much information is lost during the evolution $\tilde\Phi\Phi$~\cite{temmeH2divergenceMixingTimes2010}. }
	
	\section{Optimal State Retrieval}\label{sec:optimalStateRetrieval}
	We propose here an efficient algorithm to solve the maximisation of the determinant of $\tilde\Phi\Phi$ by reducing it to the problem of analytic centering. This can be expressed as follows: take a symmetric matrix $G[x]$ linearly dependent on some real scalars $\{x_i\}$ from a convex set $\mathcal{A}$.
	The analytic centering problem corresponds to the minimisation:
	\begin{align}\label{eq:analyticalCentering}
		\min_{\substack{x\in\mathcal{A} \\ G[x]>0}}\;\; \log (\det (G[x])^{-1}).
	\end{align}
	This kind of problem can be efficiently solved on a computer~\cite{vandenbergheDeterminantMaximizationLinear1998, Grone84}. Moreover, assuming that the set of $x$ for which $G[x]>0$  is non-empty, and that the functional we are minimising in Eq.~\eqref{eq:analyticalCentering} is strictly convex, the solution is unique.
	
	We can now prove the reduction. First, it should be noticed that $\tilde\Phi\Phi$ is not symmetric in general, so the algorithm for the analytic centering cannot be directly applied. Define then the matrix:
	\begin{align}
		\Gamma[\lambda^{(\tilde \Phi)}]:&=\cJ_\pi^{-1/2}	\;(\tilde \Phi \Phi) \;\cJ_\pi^{1/2}= \\
		&=\sum_k\, \lambda_k^{(\tilde \Phi)} \cJ_\pi^{-1/2}\,V_{\pi|\Phi\pi}^{(k)}\,\cJ_{\Phi\pi}^{-1}\,\Phi\,\cJ_\pi^{1/2}.
	\end{align}
	\ms{It should be noticed that principle~(\ref{ax:4}) can be rewritten in matrix form as:
	\begin{align}
		(\tilde\Phi\,\Phi )\,\cJ_\pi= \cJ_\pi \, (\tilde\Phi\,\Phi)^T,\label{app:eq:detailed balance}
	\end{align}
	 from which it follows that }$\Gamma[\lambda^{(\tilde \Phi)}]$ is symmetric. \ms{Indeed}, the following holds:
	\begin{align}
		\Gamma[\lambda^{(\tilde \Phi)}]^T &= \cJ_\pi^{1/2}	\;(\tilde \Phi \Phi)^T \;\cJ_\pi^{-1/2} = \\
		&=\cJ_\pi^{1/2}\,\cJ_\pi^{-1}	\;(\tilde \Phi \Phi) \;\cJ_\pi\,\cJ_\pi^{-1/2} = \Gamma[\lambda^{(\tilde \Phi)}].
	\end{align}
	Moreover, thanks to the properties of the determinant we also have that:
	\begin{align}
		\det\Gamma[\lambda^{(\tilde \Phi)}] &= (\det \cJ_\pi^{-1/2})(\det \tilde \Phi \Phi)(\det \cJ_\pi^{1/2})=\nonumber\\
		&= \det \tilde \Phi \Phi.
	\end{align}
	In fact, since $\Gamma[\lambda^{(\tilde \Phi)}]$ and $\tilde \Phi \Phi$ are related by a similarity transformation, they actually share the same spectrum. \ms{This implies} that  the following optimisations are equivalent:
	\begin{align}
		&\max_{\substack{\tilde \Phi \,{\rm state\, retrieval}}} \;\; \det \tilde \Phi \Phi \iff\\
		&\quad\quad\max_{\substack{{\lambda}^{(\tilde \Phi)}_k\geq 0 ,\; \sum_k \!{\lambda}^{(\tilde \Phi)}_k = 1 \\ \Gamma[\lambda^{(\tilde \Phi)}]> 0}} \;\; \det \Gamma[\lambda^{(\tilde \Phi)}] \iff\\
		&\quad\quad\quad\quad\quad\min_{\substack{{\lambda}^{(\tilde \Phi)}_k\geq 0 ,\; \sum_k \!{\lambda}^{(\tilde \Phi)}_k = 1 \\ \Gamma[\lambda^{(\tilde \Phi)}]> 0}} \;\; \log(\det \Gamma[\lambda^{(\tilde \Phi)}]^{-1}) .
	\end{align}
	The last problem is the analytic centering for $\Gamma[\lambda^{\tilde{\Phi}}]$, which can be solved efficiently by means of convex optimisation. This concludes the reduction. 
	
	From the implementation of this algorithm, we obtained numerical evidence that the state retrieval so defined outperforms the Bayes inspired reverse not only on average, but at the single state level.  \js{To illustrate this, in Figure~\ref{fig:figcomposed} we plot the relative entropy of recovery $D\left(\rho\,\|\, \tilde{\Phi}\Phi(\rho)\right)$ for every state in the domain.
	} 
		\ms{The results presented} corroborate the intuition that the retrieval map obtained by maximising the determinant of $\tilde\Phi\Phi$ is indeed better than the usual approach in the literature, i.e., Bayesian retrodiction.

	\section{Quantum retrieval map}
	
	The problem of identifying a state retrieval map for quantum dynamics is more subtle  than its classical counterpart. The Bayes' reversion, which depends on the existence of the joint probability of different observables in its derivation, has notoriously proven difficult to be extended to the quantum regime (see section \ref{sec:Bayes-rev} for a short review). For this reason, a reconstruction of a state retrieval map from physical principles is particularly suited to extend the concept of state recovery from the classical regime to quantum dynamics.
	
	Consider a completely positive and trace preserving (CPTP) map $\Phi$. The basic principles we require for a retrieval map to satisfy are the following:
	\begin{enumerate}
		\item \ms{The state retrieval is a CPTP map;}\label{ax:q1}
		\item \ms{If the map $\Phi$ is unitary, the state retrieval transformation is given by $\tilde\Phi := \Phi^{-1}$;}\label{ax:q2}
		\item The prior state should be perfectly retrieved, i.e., $\tilde\Phi (\Phi(\pi)) :=  \pi$.\label{ax:q3}
	\end{enumerate}
	These three principles already suffice to give a parametrisation of the recovery maps analogous to the one in Eq.~\eqref{eq:parametrisation}.

	\subsection{Parametrisation of CPTP maps with a given transition}	
	Given a CPTP map $\Psi$ with a fixed transition $\Psi(\pi)=\sigma$ we can decompose it as:
	\begin{align}\label{eq:QCrep}
		\Psi = \Lambda^{\Psi} \,\J^{-1}_{\pi},
	\end{align}
	where $\J_{\pi}$ is a completely positive generalisation of the multiplication by $\pi$, defined as $\J_{\pi}(\rho) := \sqrt{\pi} \rho \sqrt{\pi}$, and $\Lambda^\Psi$ is given by $\Lambda^\Psi = \Psi\, \J_{\pi}$. Since both $\Psi$ and $\J_{\pi}$ are CP, $\Lambda^\Psi$ is CP as well. Moreover, since $\Psi$ is trace preserving it follows that:
	\begin{align}
		(\Lambda^\Psi)^\dagger [\ids] = \J_{\pi}\,\Psi^\dagger[\ids] = \pi,\label{eq:TPConstraint}
	\end{align}
	because the trace preserving condition is equivalent to the equation $\Psi^\dagger[\ids] = \ids$. From the fixed transition it also follows that:
	\begin{align}
		(\Lambda^\Psi) [\ids] = \Psi\,\J_{\pi}\,[\ids] = \sigma.\label{eq:ax3Constraint}
	\end{align}
	In this way, similarly to what happens for the classical case, a quantum channel is uniquely identified by a map $\Lambda^\Psi\in\mathcal{U}_Q(\sigma,\pi)$, the space of CP transformations that map the identity to $\sigma$, and whose adjoint maps the identity to $\pi$. 
	This set is convex. Its extreme points can be characterised in terms of their Kraus operators $\{V_i\}_i$. In particular,  a map $\Psi[\rho] := \sum_i V_i \rho V_i^\dagger$ is an extreme point of $\mathcal{U}_Q(\sigma,\pi)$ if the following holds~\cite{Choi1975,rudolphExtremalQuantumStates2004}:
	\begin{enumerate}[label=\Alph*.]
		\item $\sum_{i}V_iV_i\da=\sigma$;
		\item $\sum_{i}V_i\da V_i=\pi$;
		\item $(V_iV_j\da)_{i,j}$ and $(V_j\da V_i)_{i,j}$ are jointly linear independent.
	\end{enumerate}
	
	Differently from the classical case, though, the set $\mathcal{U}_Q(\sigma,\pi)$ contains a non-trivial symmetry \ms{(that is, not reducible to a relabeling)}: consider the two unitary maps $U_\pi$ and $V_{\sigma}$, defined by $U_\pi[\rho]:= U \rho \,U^\dagger$, and satisfying  $U_\pi[\pi] = \pi$ (and analogously for $V_{\sigma}$, with $V_{\sigma}[\sigma] = \sigma$). Then Eq.~\eqref{eq:TPConstraint} and Eq.~\eqref{eq:ax3Constraint} are invariant under the transformation:
	\begin{align}
		\label{eq:unit-invariance}
		\Lambda^\Psi \rightarrow  V_\sigma\;\Lambda^\Psi \;U_{\pi}.
	\end{align}
	
	Hence, every $\Lambda^\Psi $ contained in $\mathcal{U}_Q(\sigma,\pi)$ is part of an invariant  family connected by the unitary transformations defined in Eq.~\eqref{eq:unit-invariance}. 
	
	The space $\mathcal{U}_Q(\sigma,\pi)$ can also be characterised in terms of the Choi operator $\mathcal{C}(\Lambda^{\Psi})$ of the maps $\Lambda^\Psi$ contained in it. In particular, we show in Appendix~\ref{app:Choi} how this  naturally translates to a characterisation of $\mathcal{U}_Q(\sigma,\pi)$ in terms of a marginal problem, leading to a set of linear inequalities that constrain the spectrum of the Choi matrices therein.

	\ms{In order to extend principle~(\ref{ax:4}) to the quantum regime we generalise its matrix expression  (see Eq.~\eqref{app:eq:detailed balance}) as follows:}
	\begin{enumerate}[resume]
		\item \ms{The channel $\tilde\Phi \Phi$ satisfies the equation
			\begin{align}\label{eq:detailedBalanceQ}
				(\tilde\Phi\,\Phi )\,\J_\pi= \J_\pi \, (\tilde\Phi\,\Phi)^\dagger ,
			\end{align}
		with respect to the prior $\pi$.\label{ax:q4}}
	\end{enumerate}
	
	\ms{This expression is equivalent to a weak form of detailed balance for quantum evolutions~\cite{fagnolaGeneratorsKMSSymmetric2010, temmeH2divergenceMixingTimes2010}. In particular, it should be noticed that this principle coincides with the usual version of detailed balance for classical evolutions, as it can also be understood from the fact that for commuting states $\J_\pi\equiv \cJ_\pi$.}
	
	Finally, the last requirement can be translated to:
	\begin{enumerate}[resume]
		\item A map satisfying principles~(\ref{ax:q1}-\ref{ax:q4}) is a state retrieval if all the eigenvalues of $\tilde\Phi\,\Phi$ are non-negative.\label{ax:q5}
	\end{enumerate}

	It should be noticed that the spectrum of a CP-map is the same as the one of the corresponding vectorised version~\cite{wolf2008dividing}.


	
	\subsection{Petz' map}
	We can now proceed to define a map analogous to the Bayes inspired reverse for quantum systems. First, it is clear from Eq.~\eqref{eq:TPConstraint} and Eq.~\eqref{eq:ax3Constraint} that \ms{for any generic} $\Lambda^\Psi$ \ms{in} $ \mathcal{U}_Q(\pi,\Phi\pi)$, then $(\Lambda^\Psi)^\dagger\in\mathcal{U}_Q(\Phi\pi,\pi)$, so there is a one to one correspondence between the two sets, given by the adjoint transformation. Moreover, the CPTP map $\Phi$ can be written as:
	\begin{align}
		\Phi = \Lambda^\Phi\, \J_{\pi}^{-1},
	\end{align}
	where $\Lambda^\Phi\in\mathcal{U}_Q(\Phi\pi,\pi)$.  By inspecting Eq.~\eqref{eq:baDecomposition}, one can see that for classical systems the Bayes' retrodiction is obtained by choosing $\Lambda^{\tilde\Phi} := (\Lambda^{\Phi})^T$. In complete analogy we define:
	\begin{align}
		\tilde\Phi_P = (\Lambda^\Phi)^\dagger \,\J_{\Phi\pi}^{-1} = \J_{\pi}\, \Phi^\dagger\,\J_{\Phi\pi}^{-1},\label{eq:petzRecovery}
	\end{align}
	where on the right hand side one can read the definition of the Petz recovery map, commonly used as a quantum extension of the Bayes rule~\cite{watanabe1955,watanabe1965,Leifer2013,buscemi2021,aw2021}. This argument gives yet another derivation justifying this identification.

	It is easy to show that the Petz recovery map satisfies all the desiderata of a state retrieval map. In particular one notices that the Petz recovery map satisfies principles (\ref{ax:q4}) and (\ref{ax:q5}) by rewriting it as:
	\begin{align}
		\tilde{\Phi}_{P} \,\Phi=  \J_{\pi} \, \sqrbra{(\J_{\Phi\pi}^{-1/2}\,\Phi)^\dagger(\J_{\Phi\pi}^{-1/2}\,\Phi)},
	\end{align}
	and by using similar arguments as the one for the classical case.
	
	This discussion shows that not only can the approach presented here  be useful to clarify the basic requirements for a quantum state retrieval map, but it can also help in highlighting the correspondence between the classical and the quantum scenario.

	\subsection{Optimal state retrieval: case studies}
	In complete analogy with the classical case we define the optimal retrieval map to be the one satisfying the following
	\begin{principle*}
		The optimal retrieval map is defined to be the $\tilde{\Phi}_O$ that maximises the determinant of $\tilde{\Phi}_O\,\Phi$ under the constraints (\ref{ax:q1}-\ref{ax:q5}).
	\end{principle*}
	\begin{figure}
		\centering
		\includegraphics[width=0.9\linewidth]{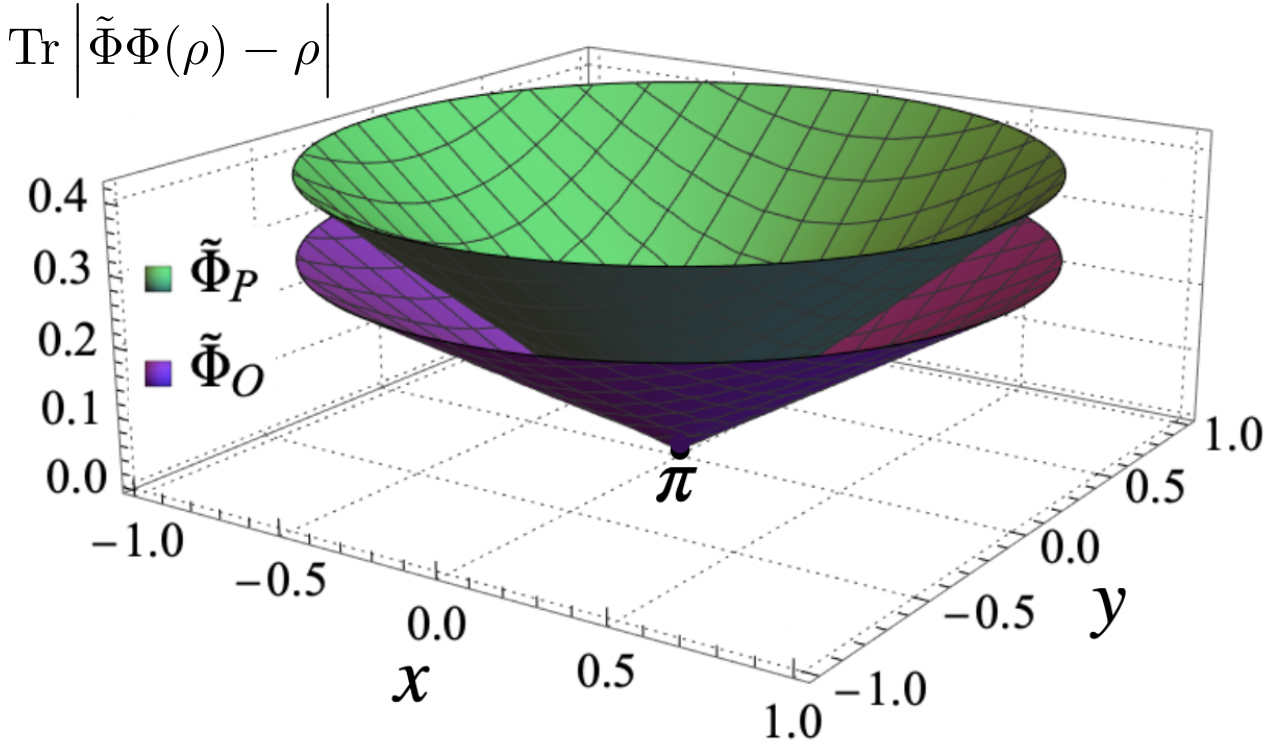}
		\caption{Trace distance between a distribution and its evolution forwards and backwards for $\Phi = \Delta_\eta$ for a qubit, using as prior distribution $\pi = \ids/2$. The states are parametrised as $\rho = (\ids + x \sigma_x + y\sigma_y)/2$, corresponding to the disk at the equator of the Bloch sphere. It can be seen how the optimal state retrieval map outperforms the Petz recovery on all states. It should be pointed out that the relative entropy presents the same feature, but the trace distance makes the plot more understandable.}
		\label{fig:fig2}
	\end{figure}

	The use of the volume as a significant quantity in the study of quantum channels has been already explored in relevant works such as \cite{paternostro2013,Buscemi2019}. It is not immediately clear how one could devise a parametrisation to explore the whole space $\mathcal{U}_Q(\sigma,\pi)$. Moreover, the symmetry expressed by Eq.~\eqref{eq:unit-invariance} makes designing a maximisation algorithm more involved. For this reason, we limit ourselves here to the treatment of analytically solvable cases.
	
	In particular, consider the depolarising channel given by:
	\begin{align}
		\Delta_\eta(\rho) = (1-\eta) \,\rho + \eta \,\frac{\ids}{d},
	\end{align}
	where $\eta$ is a scalar parameter in $[0,1+(d^2-1)^{-1}]$ and $d$ is the dimension of the quantum system in consideration. Choosing $\ids/d$ to be the prior state, all the calculations can be carried out analytically.
	
	First, we compute the Petz recovery map in this case. The prior state is invariant under the transformation, that is $\Delta_\eta(\ids/d) \equiv \ids/d$, implying that $\J_{\ids/d} = \J_{\Delta_\eta\ids/d}$. This is simply given by $\J_{\ids/d} = \id/d$, where we used a different notation for the identity superoperator $\id$ and the state $\ids/d$. Finally, we can compute the adjoint of the depolarising channel from the series of equations:
	\begin{align}
		\Tr{\sigma^\dagger \Delta_\eta (\rho)} &= (1-\eta) \,\Tr{\sigma^\dagger\rho} + \frac{\eta}{d} = \\
		&=\Tr{\left((1-\eta)\,\sigma+\eta \,\frac{\ids}{d}\right)^\dagger\rho} = \\
		&= \Tr{\Delta_\eta (\sigma)^\dagger \rho},
	\end{align}
	implying that $\Delta_\eta^\dagger = \Delta_\eta$. Hence, by using the definition in Eq.~\eqref{eq:petzRecovery} we obtain that the Petz recovery map for the depolarising channel is given by:
	\begin{align}
		(\tilde\Delta_\eta)_P = \J_{\ids/d}\, \Delta_\eta^\dagger \,\J_{\Delta_\eta\ids/d}^{-1} =\Delta_\eta, \label{eq:petzRecDepolarising}
	\end{align}
	that is by the depolarising channel itself. 
	This was somehow expected, for, as already observed in \cite{aw2021}, the Bayes inspired reverse channel computed considering as prior a fixed point of the channel is the channel itself.
	\begin{figure}
		\centering
		\includegraphics[width=0.9\linewidth]{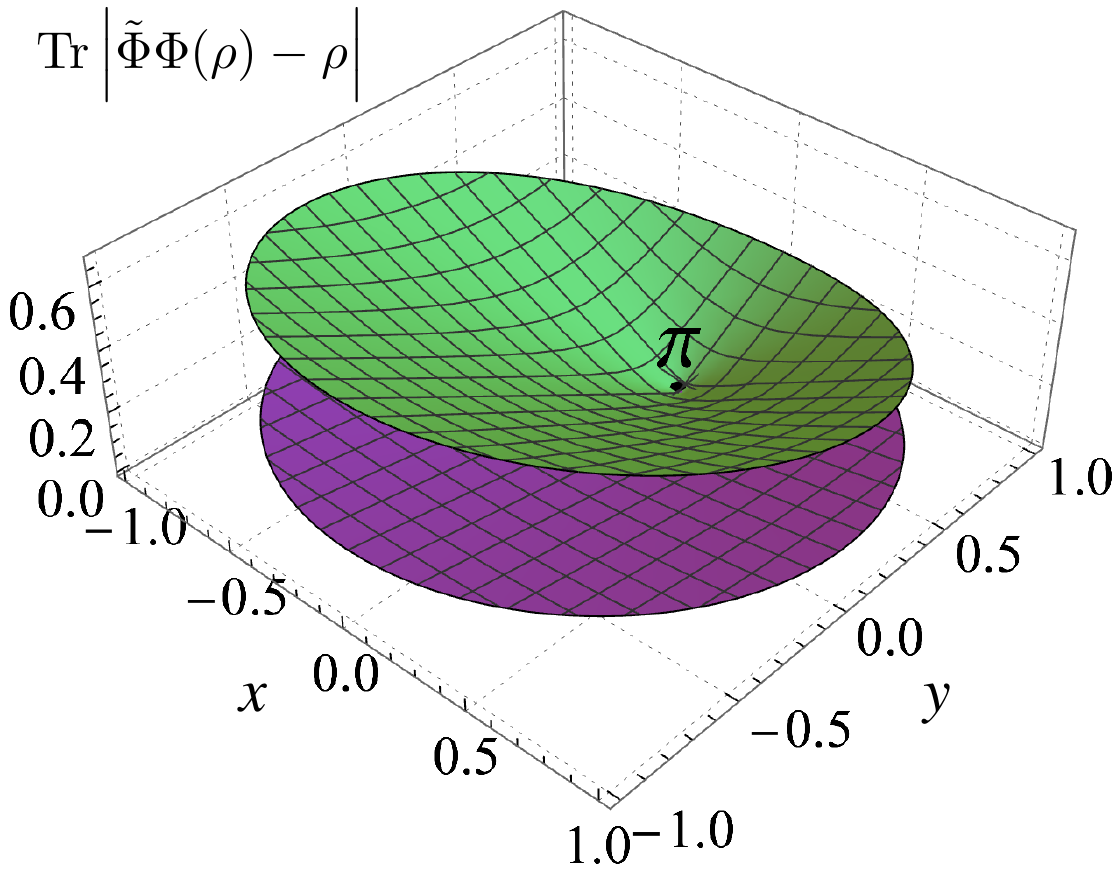}
		\caption{\ms{Trace distance between a distribution and its evolution forwards and backwards for the evolution $\Phi$ specified in Eq.~\eqref{eq:phiThermalSwap} and states of the form $\rho = ((\ids + x \sigma_x + y\sigma_y)/2 )\otimes \gamma_\beta$ and prior state $\pi = \gamma_\beta \otimes\gamma_\beta$.}}
		\label{fig:Termico}
	\end{figure}	
	We can now pass to compute the optimal state retrieval. There are two remarks that need to be made beforehand: first, it should be noticed that the constraint in Eq.~\eqref{eq:detailedBalanceQ} is satisfied at the level of the map itself, that is
	\begin{align}
		\Delta_\eta \,\J_{\ids/d}= \J_{\ids/d} \,\Delta_\eta^\dagger.
	\end{align}
	Moreover, the spectrum of $\Delta_\eta$ is real and positive, as it can be understood by decomposing it on any basis of the Hermitian operators. These two observations together imply that 
	\begin{align}
		(\tilde\Delta_\eta)_O = \id.
	\end{align}
	In fact, this map always maximises the determinant of $\tilde{\Phi}_O\,\Phi$, since any other CPTP will contract the volume of the phase space. Usually, though, it is ruled out by the requirements imposed by principles ~(\ref{ax:q4}) and~(\ref{ax:q5}). The generality of these considerations directly leads to the following:
	\begin{theorem*} Whenever a transformation \ms{$\Phi$} has positive spectrum and it is detailed balance with respect to the prior state (meaning that $\Phi\, \J_\pi = \J_\pi\, \Phi^\dagger$) the optimal state retrieval is given by the identity map.
	\end{theorem*}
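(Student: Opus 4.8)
The plan is to show that, under the theorem's two hypotheses, the identity map is simultaneously a \emph{feasible} state retrieval---i.e.\ it satisfies axioms~(\ref{ax:q1}--\ref{ax:q5})---and the \emph{maximiser} of $\det\tilde\Phi\Phi$, so that it coincides with $\tilde\Phi_O$. First I would extract the only nontrivial consequence of the detailed-balance hypothesis $\Phi\,\J_\pi=\J_\pi\,\Phi\da$: evaluating both sides on $\ids$ and using $\Phi\da[\ids]=\ids$ (trace preservation) gives $\Phi(\pi)=\J_\pi[\ids]=\pi$, so the prior is a fixed point of $\Phi$. With this the feasibility of $\tilde\Phi=\id$ is checked axiom by axiom: (\ref{ax:q1}) the identity is trivially CPTP; (\ref{ax:q3}) reduces to $\id(\Phi(\pi))=\Phi(\pi)=\pi$; (\ref{ax:q4}) reduces, since $\id\,\Phi=\Phi$, to the assumed $\Phi\,\J_\pi=\J_\pi\,\Phi\da$; and (\ref{ax:q5}) reduces to the assumed positivity of the spectrum of $\Phi$. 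The only axiom requiring a separate remark is (\ref{ax:q2}): I would note that a unitary channel $\Phi[\rho]=U\rho U\da$ has eigenvalues $e^{\ii(\theta_a-\theta_b)}$ on the unit circle, and a phase that is a non-negative real must equal $1$, forcing $U\propto\ids$ and $\Phi=\id$; hence under the positive-spectrum hypothesis (\ref{ax:q2}) is either vacuous or already prescribes $\tilde\Phi=\id$.

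The optimality step is the core of the argument and rests on two elementary facts about determinants of superoperators. By multiplicativity $\det(\tilde\Phi\Phi)=\det(\tilde\Phi)\,\det(\Phi)$, and because $\Phi$ has positive spectrum $\det\Phi>0$ is a fixed constant. Any feasible $\tilde\Phi$ is in particular CPTP, so all its eigenvalues lie in the closed unit disk; since the superoperator is real on Hermitian operators its determinant is real, whence $\det\tilde\Phi\leq|\det\tilde\Phi|=\prod_i|\lambda_i|\leq 1$. Combining the two, $\det(\tilde\Phi\Phi)\leq\det\Phi=\det(\id\,\Phi)$ for every feasible $\tilde\Phi$, with equality attained at $\tilde\Phi=\id$. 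Therefore the identity maximises the objective under constraints~(\ref{ax:q1}--\ref{ax:q5}) and is an optimal state retrieval, as claimed.

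I expect the main obstacle to lie not in exhibiting $\id$ as \emph{a} maximiser but in pinning it down as \emph{the} maximiser. The equality case $\det\tilde\Phi=1$ forces every eigenvalue of the CPTP map $\tilde\Phi$ onto the unit circle, which identifies $\tilde\Phi$ as a unitary channel; one must then invoke conditions~(\ref{ax:q3}--\ref{ax:q5}) to exclude every nontrivial unitary and recover $\tilde\Phi=\id$. A cleaner route is to appeal to the analytic-centering reformulation of Sec.~\ref{sec:optimalStateRetrieval}, whose strict convexity already guarantees uniqueness of the optimiser, so that identifying $\id$ as one maximiser suffices. The one remaining point of care is the degenerate case in which $\Phi$ has a vanishing eigenvalue and $\det\Phi=0$: then the objective vanishes identically, every feasible map is optimal, and $\id$ remains a solution though no longer the unique one.
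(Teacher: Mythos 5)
Your proposal is correct and follows essentially the same route as the paper: the paper's (very terse) justification is precisely that under the two hypotheses $\tilde\Phi=\id$ satisfies conditions~(\ref{ax:q4}) and~(\ref{ax:q5}) and hence is feasible, and that it maximises $\det\tilde\Phi\Phi$ because any other CPTP map contracts the volume of state space. Your write-up merely makes explicit the steps the paper leaves implicit (deriving $\Phi(\pi)=\pi$ from detailed balance, the bound $\det\tilde\Phi\le 1$ via the unit-disk spectrum of CPTP maps, and the equality/uniqueness caveat), all of which are sound.
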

	\begin{figure*}[t!]
		\centering
		\includegraphics[width=1.\linewidth]{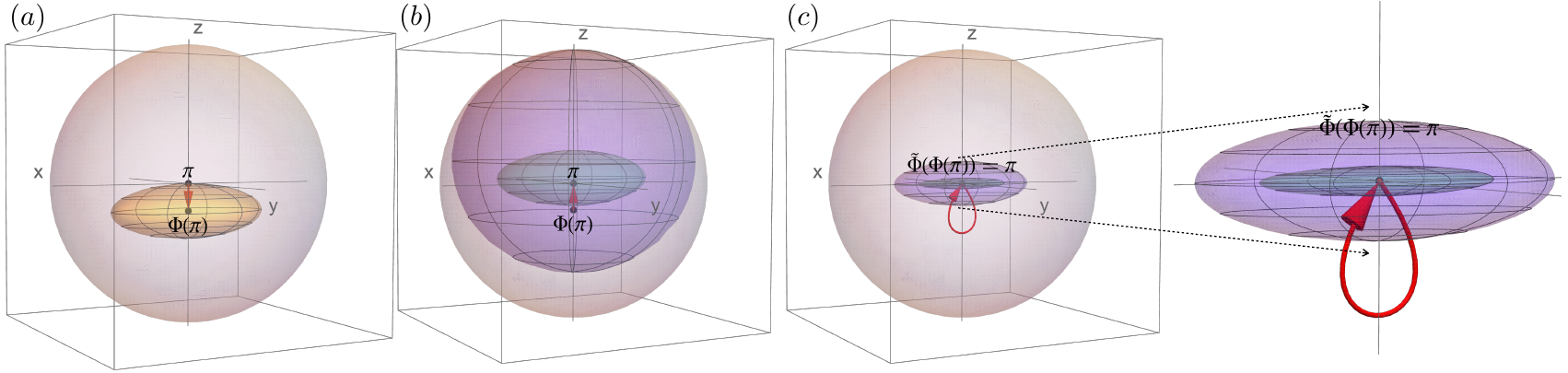}
		\caption{\js{Comparison of the action on the Bloch Sphere of the Petz map and of the optimal retrieval map. In all panels the shadowed area without wires represents the Bloch sphere. In panel $(a)$ we plot in yellow the image of the Bloch sphere under the action of the map $\Phi$. With the red arrow we highlight the specific transition $\pi=\frac{\ids}{2}\to\Phi(\pi)$. In panel $(b)$ we plot the action on the Bloch sphere of the optimal ($\tilde{\Phi}_O$) and Petz ($\tilde{\Phi}_P$) retrieval maps in purple and green, respectively. In both cases the chosen prior is $\pi=\frac{\ids}{2}$. With the red arrow we highlight how both $\tilde{\Phi}_O$ and $\tilde{\Phi}_B$ map the state $\Phi(\pi)$ to $\tilde{\Phi}_O(\Phi(\pi))=\tilde{\Phi}_B(\Phi(\pi))=\pi$. The Bloch sphere is compressed to a much smaller image by the action of the Petz map compared with the optimal retrieval map. Panel $(b)$ helps visualise part of the rationale for the criteria characterising the optimal retrieval map. In this case, where the map $\Phi$ is simply a translation composed with a compression, the optimal retrieval map is also the composition of a compression and a translation specified as follows: the translation is the one recovering the desired prior $\pi$ (the red arrow in the picture) and the compression is the minimal one making $\tilde{\Phi}_O$ physical (i.e., so that the image of $\tilde{\Phi}_O$ is contained in the Bloch sphere). In panel $(c)$ we plot the action on the Bloch sphere of the forth-and-back maps  $\tilde{\Phi}_O \Phi$ and $\tilde{\Phi}_P \Phi$ (with the same colour scheme as before). The prior $\pi$ is the fixed point of the forth-and-back maps, and the different magnitude in the compression of the Bloch sphere through $\tilde{\Phi}_O \Phi$ and $\tilde{\Phi}_P \Phi$ is evident. The choice of the prior $\pi=\frac{\ids}{2}$ makes $\tilde{\Phi}\Phi$ unital. This allows us to use the parametrisation given in~\cite{ruskai2002} to explore the whole space of possible state retrieval maps.}}
		\label{fig:Traslazione-Composed}
	\end{figure*}
	
	The theorem, means that in this case the optimal strategy is to leave the system unperturbed. It should be noticed that under the same assumptions the Petz recovery map is given by the map itself, $\tilde\Phi_P = \Phi$, so that applying it leads to a further deterioration of the information on the initial state. This shows how our definition of optimal retrieval is more suited in the task of recovering a state after a transformation. The difference in performance between the Petz recovery map and the optimal state retrieval is shown in Fig.~\ref{fig:fig2}.
	
	A crucial simplification in the study of the depolarising channel is that it is a unital channel, so that one can use $\ids/d$ as a prior state, leading to  $\J_{\ids/d} = \id/d$. In the following we show that one can obtain analytical insights even in the case of non-unital maps.
	In Fig.~\ref{fig:Termico} we compare the performance of the optimal map and the Petz' one for the two-qubit channel defined by:
	\begin{align}
		\Phi(\rho_A\otimes\rho_B) = \theta_{\lambda_1}(\rho_B)\otimes\theta_{\lambda_2}(\rho_A),
	\end{align}
	where $\theta_\lambda$ is the thermalising channel defined by:
	\begin{align}\label{eq:phiThermalSwap}
		\theta_\lambda (\rho) = (1-\lambda)\rho + \lambda \Tr{\rho} \gamma_\beta
	\end{align}
	and $\gamma_\beta=\frac{e^{-\beta H}}{\Tr{e^{-\beta H}}}$ is the Gibbs state associated to the Hamiltonian $H :=\epsilon\ketbra{1}{1}$. Then, a simple calculation shows that the Petz map coincides with the original channel, i.e., $\tilde \Phi_P = \Phi$. On the other hand, the map maximising the determinant (under the constraints ax.~(\ref{ax:q1}-\ref{ax:q5}))  is given by $\tilde\Phi_O = \text{SWAP}$, the swap operator. 
	
	\js{Finally, in Figure~\ref{fig:Traslazione-Composed} we highlight part of the rationale for the criteria characterising the optimal retrieval map. In order to do so we study the emblematic example of a map $\Phi$ obtained from the composition of a translation and a compression in the Bloch sphere. From panel $(b)$ it is evident how the optimal retrieval map corresponds to the map that minimises the compression of the domain while recovering the desired prior $\pi$.
	}

	\subsection{Quality of the retrieval}
	
	As we did for stochastic maps, we present here some analytical arguments suggesting that optimising the determinant indeed leads to a better quality of retrieval. 
	
	First, it should be noticed that for quantum channels Eq.~\eqref{eq:relEntropyIdFB} applies without modifications, so the same arguments presented above in this regard can also be applied to quantum dynamics.
	
	The generalisation of Eq.~(\ref{eq:19}-\ref{eq:20}), instead, needs a bit more care. First, we introduce the following contrast function:
	\begin{align}
			H_{{\rm sq}}(\rho||\sigma) =  \Tr{\sqrt{\rho}(\rho-\sigma)\sqrt{\sigma^{-1}}}.
	\end{align}
	This quantity is positive, zero if and only if $\rho \equiv\sigma$, and can be regarded as akin to the Kullback--Leibler relative entropy. It is connected to the super-operator $\J_\rho$ thanks to the following expansion for close-by states:
	\begin{align}
		H_{{\rm sq}}(\rho||\rho+\delta\rho) \simeq \frac{1}{2} \Tr{\delta \rho\,\J_{\rho}^{-1}[\delta \rho]},
	\end{align}
	for $\Tr{|\delta \rho|}\ll1$. Then, Eq.~\eqref{eq:19} get replaced by:
	\begin{align}
		H_{{\rm sq}}(\pi +\delta\rho ||&\tilde\Phi\Phi(\pi +\delta\rho ))\leq\nonumber \\
		\leq&\;\frac{H_{{\rm sq}}(\pi ||\pi+\delta\rho)}{2} \;\log\det(\tilde\Phi\Phi)^{-1},
	\end{align}
	(up to order $\mathcal{O}\norbra{|\delta\rho|^2}$) and Eq.~\eqref{eq:20} get replaced by:
	\begin{align}
		\inf_{\rho,\sigma} \,&\norbra{\frac{H_{{\rm sq}}(\rho ||\sigma)  - H_{{\rm sq}}(\tilde\Phi\Phi(\rho)||\tilde\Phi\Phi(\sigma))}{H_{{\rm sq}}(\rho||	\sigma)}}\leq\nonumber\\
			&\qquad\qquad\qquad\qquad\qquad\leq2\log\det(\tilde\Phi\Phi)^{-1}.
	\end{align}
	The proof for these inequalities is completely analogous to the one for the classical case and it is presented in Appendix~\ref{app:relEntropyEst2}. The main difference with the classical case is that here we cannot consider the Umegaki relative entropy, unless we demand a stronger version of principle~(\ref{ax:q4}), but this seems unnecessary for the situation at hand (see Appendix~\ref{app:relEntropyEst2} for more details).

	\section{Bayes reversion from physical principles}	\label{sec:Bayes-rev}
	
	The classic derivation of the Bayes inspired reverse channel comes directly from fundamental theorems of probability theory. \js{In fact, since the intersection of two sets $A$ and $B$ is commutative, this means that $P(A\cap B)=P(B \cap A)$, so by using the rule of conditional probability (or the axiom of conditional probability following de Finetti \cite{finetti1974}) one easily obtains Bayes' theorem}.
	This derivation heavily relies on the notion of commutativity for the operation of composing probabilities, which is unavailable when trying to extend the construction of a reverse channel from classical to quantum probabilities. In fact, the non-commutative structure at the basis of quantum theory makes the assignment of a compound probability for a general pair of quantum events problematic. In order to obtain a quantum extension of Bayes inspired reversion a different approach is needed, and many attempts already exist.
	Among the most modern ones we mention two. The first obtains the classical Bayes inspired reverse from entropy maximisation methods; an overview about this topic is given in~\cite{Giffin07}. This approach has been further developed to the quantum case as in~\cite{Ali12,kostecki2014,vanslette2017,vanslette2018}. Here the Bayes inspired reverse is mainly treated as a tool from inference problems and its physical relevance is somehow set aside.
	
	The second modern and promising approach starts from giving a definition of Bayes inspired reverse in the language of category theory. For its generality this approach is naturally extensible to the quantum scenario, as it is shown in~\cite{parzygnat2020} where they give a  characterisation of Bayes inspired reverse in terms of commuting diagrams and they show its meaning both in classical and quantum probability. Similar approaches can be found in~\cite{Accardi78,Accardi82,Leifer06,coecke12,culbertson14}.
	
	In this section, motivated by the results presented so far, we are interested in exploring the possibility of a reconstruction of the Bayes inspired reverse starting from few physical principles. If this would be doable, the extension from classical to quantum probability would result \ms{naturally}, as it was shown in the previous section.
	
	The $5$ requirements presented thus far only individuate a family of state retrieval maps which includes the Bayes inspired reverse as a particular case. We can then try to add an additional requirement to see if this singles out the Bayes inspired reverse map \ms{in the classical case}. A particularly natural choice is the following:
	\begin{enumerate} [resume]
		\item The reversion procedure is involutive, that is $\tilde{\tilde{\Phi}}=\Phi$.\label{ax:6}
	\end{enumerate}
	
	As we argued in the introduction, we call the state retrieval maps that satisfy this principle reverse maps.\\
	Principle \eqref{ax:6} implies that $\R^2 = \id$, which heavily constrains the freedom on the choice of the reversion procedure $\R$. In the next section we present some evidence that allow us to conjecture that the requirement~(\ref{ax:6}) is strong enough to single out the identity transformation (corresponding to the Bayes inspired reverse) at least in the case in which $\R$ is linear and solely depends on the unordered pair of states of the fixed transition. 
	
	\begin{figure*}[t!]
		\centering
		\includegraphics[width=1.\linewidth]{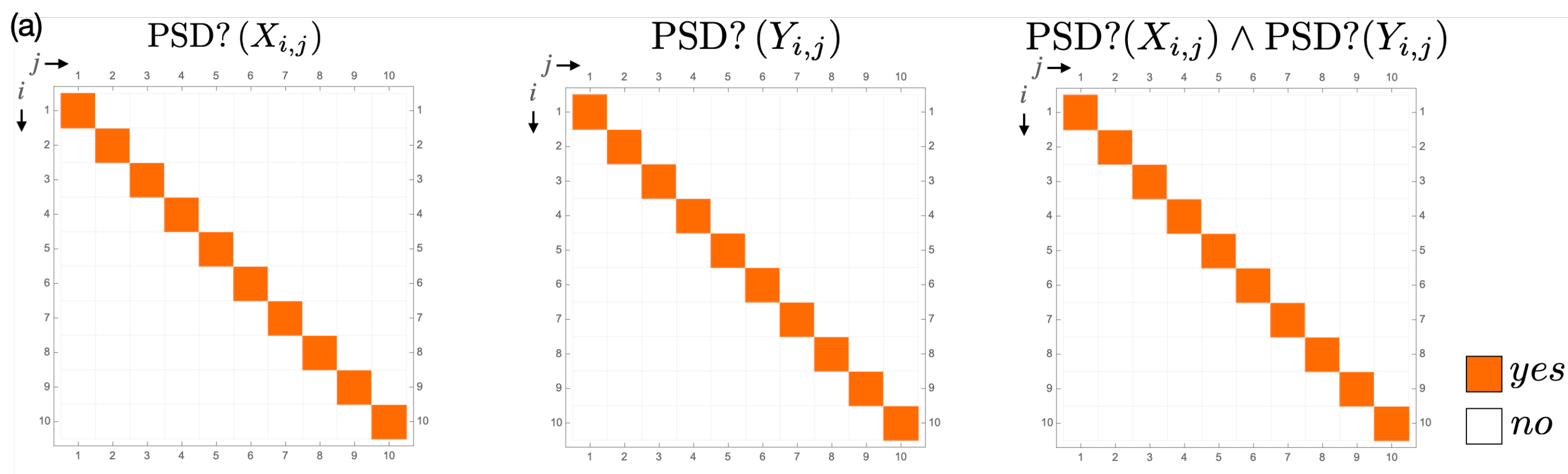}
		\includegraphics[width=1.\linewidth]{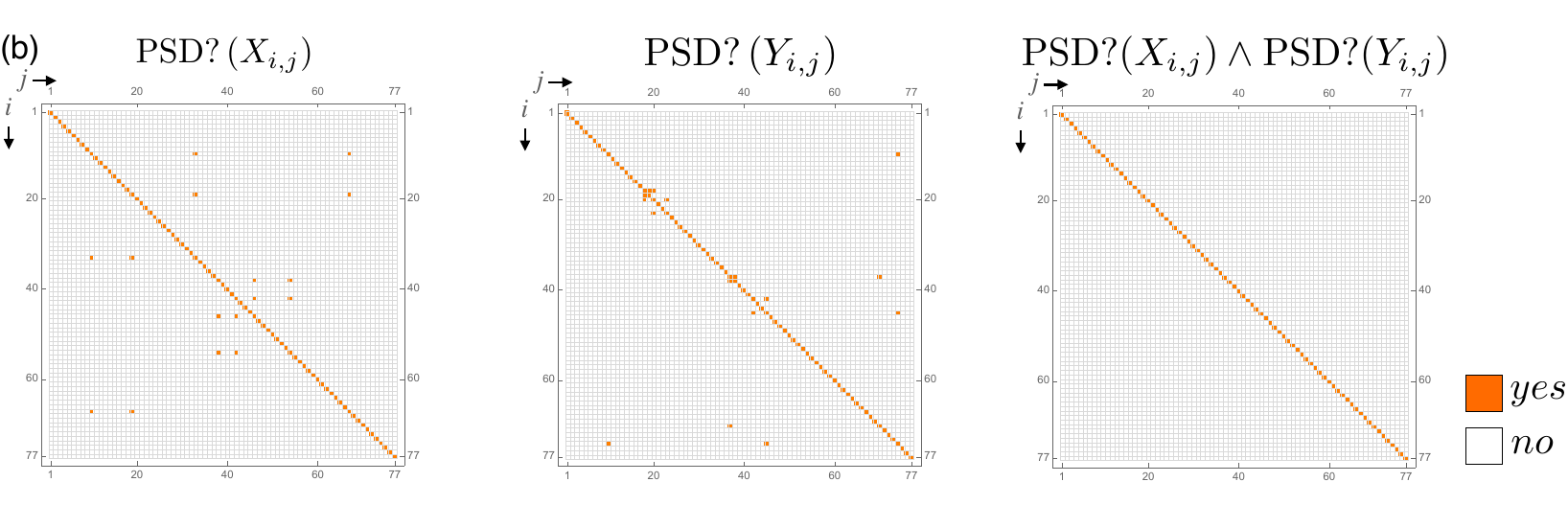}
		\caption{\js{Graphical representation of the tests \eqref{eq:set-X} and \eqref{eq:set-Y}.  We check if each matrix $\{X_{i,j}\}_{i,j}$ and $\{Y_{i,j}\}_{i,j}$ is PSD. Each of the six tables represents the answer to the question in their title. An orange square represents a "Yes", while a white square represents a "no".  For example, the first orange square in the first table in the top left tells us that the matrix $X_{1,1}$ is PSD.
		In order to check if $\mathcal{R}$ can be any other permutation other than the identity one checks which of the matrices $X_{i,j}$ and $Y_{i,j}$ are Positive SemiDefinite (PSD). If, for a fixed value of $i$ and $j$, both the matrix $X_{i,j}$ and $Y_{i,j}$ are PSD, then a $\mathcal{R}$ permuting $i$ and $j$ is allowed. In figure (a) we choose $\pi=({0.1, 0.2, 0.7})$ and $\Phi\pi=({0.3, 0.6, 0.1})$ (notice that $\Phi\pi$ is jut a notation for a vector, and does not refer to any map in particular) and compute the vertices $\{V^{(i)}_{\Phi\pi|\pi}\}_i$ of $\mathcal{U}(\Phi\pi,\pi)$ using the algorithm of Jurkat and Ryser~\cite{Jurkat1967}. From the vertices we can compute the matrices $X_{i,j}$ and $Y_{i,j}$ and check if they are PSD. In the left and central plots of figure (a) we use an orange square to denote a PSD matrix and a white square to denote a matrix that is not PSD. We note that only the matrices $\{X_{i,i}\}_i$ and $\{Y_{i,i}\}_i$ are PSD, thus, in this case, $\mathcal{R}$ is not allowed to be any permutation different from the identity. In the rightmost plot of figure (a)  the color of the square at coordinate $(i,j)$ is orange if both $\{X_{i,i}\}_i$ and $\{Y_{i,i}\}_i$ are PSD, white otherwise. In figure (b) we chose $\pi=(0.1, 0.6, 0.1, 0.2)$ and $\Phi \pi = (0.1, 0.2, 0.3, 0.4)$ and plotted the analogous quantities of figure (a). In this case, checking the positive semidefinitness of just $\{X_{i,i}\}_i$ or $\{Y_{i,i}\}_i$  is not enough to isolate just Bayes reversion. Checking the simultaneous positive semidefinitness of $\{X_{i,i}\}_i$ and $\{Y_{i,i}\}_i$ isolates, even in this case, just the Bayes reversion.  }}
		\label{fig:PSDQ}
	\end{figure*}
	
	\subsection{Characterisation of $\R$}
	As it was shown in Section~\ref{sec:bayesreverse}, Bayes' reversion corresponds to choosing the transformation $\R$ to be the identity on the space of coefficients. We are thus interested in knowing if principles~(\ref{ax:1}-\ref{ax:6}) are enough to ensure that $\R = \id$, at least in the case in which $\R$ is a matrix. 
	
	We order the vertices of $\mathcal{U}(\Phi\pi,\pi)$ in the following way: any vertex that corresponds to a permutation is moved to the beginning of the list $\{V^{(i)}_{\Phi\pi|\pi}\}_{i=1,\dots,n}$. Say there are $\ell$ of those. Then, we have the following 
	
	\begin{observation*}
		$\R$ is the direct sum of the identity matrix acting on the first $\ell$ sites and a permutation matrix with cycles of maximal length $2$ acting on sites $\ell+1,\dots,n$.
	\end{observation*}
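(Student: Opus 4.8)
The plan is to show that the three ingredients---linearity of $\R$, stochasticity, and involutivity---already force the permutation structure, and that requirement~(\ref{ax:2}) then rigidly pins the permutation vertices. First I would note that, since by assumption $\R$ is a matrix depending only on the unordered pair $\{\pi,\Phi\pi\}$, the same $\R$ acts on the coefficient simplex in both directions, so axiom~(\ref{ax:6}) translates into $\R^2=\id$. Because every point of the simplex of coefficients $\{\lambda_k^{(\Phi)}\}$ corresponds to a legitimate stochastic map and is sent by $\R$ to another probability vector $\{\lambda_k^{(\tilde\Phi)}\}$, the matrix $\R$ maps the simplex into itself; being linear, it is therefore (column) stochastic. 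Now $\R^{-1}=\R$ is stochastic as well, and a stochastic matrix whose inverse is stochastic must permute the extreme points of the simplex, hence be a permutation matrix. Combined with $\R^2=\id$, this immediately yields that $\R$ is an involutive permutation, i.e.\ a product of fixed points and disjoint transpositions, so all its cycles have length at most $2$.

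It remains to prove that $\R$ fixes the first $\ell$ coordinates, those labelling the vertices for which $V_{\Phi\pi|\pi}^{(k)}\cJ_\pi^{-1}$ is a permutation matrix. The key step is to evaluate $\R$ on the corresponding extreme coefficient vectors. Setting $\lambda^{(\Phi)}=e_i$ for such an index makes $\Phi=V_{\Phi\pi|\pi}^{(i)}\cJ_\pi^{-1}=:P_i$ an honest permutation, whose inverse $P_i^{-1}$ is stochastic; requirement~(\ref{ax:2}) then forces $\tilde\Phi=P_i^{-1}$. I would then check the short algebraic identity that, for a permutation vertex, the transpose-matched reverse vertex reproduces exactly this inverse, namely $V_{\pi|\Phi\pi}^{(i)}\cJ_{\Phi\pi}^{-1}=(V_{\Phi\pi|\pi}^{(i)}\cJ_\pi^{-1})^{-1}=P_i^{-1}$, using $(V_{\Phi\pi|\pi}^{(i)})^T=V_{\pi|\Phi\pi}^{(i)}$ and $\Phi\pi=P_i\pi$. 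Since a vertex admits the single convex representation $e_i$, this means $\R(e_i)=e_i$, so $\R$ acts as the identity on each permutation vertex.

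Finally, because $\R$ is a permutation fixing $\{1,\dots,\ell\}$ pointwise, it necessarily leaves the complement $\{\ell+1,\dots,n\}$ invariant and restricts there to an involutive permutation; writing $\R$ in the chosen ordering then produces precisely the direct sum of the $\ell\times\ell$ identity block and a permutation with cycles of length at most $2$ on the remaining sites. I expect the main obstacle to be the second paragraph: one must justify that $\R$ is genuinely a well-defined stochastic matrix on the coefficient space despite the quotient-space ambiguity flagged earlier (the polytope $\mathcal{U}(\Phi\pi,\pi)$ not being a simplex), and one must verify the transpose--inverse identity together with the uniqueness of the extreme representation carefully, since these are what convert requirement~(\ref{ax:2}) into the rigid constraint $\R(e_i)=e_i$. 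The stochastic-inverse-is-a-permutation lemma is the linchpin of the first paragraph and should be invoked explicitly.
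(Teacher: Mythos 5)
Your proposal is correct and follows essentially the same route as the paper: stochasticity of $\R$ on the coefficient simplex plus $\R^2=\id$ forces $\R$ to be an involutive permutation, and requirement~(\ref{ax:2}) together with the transpose correspondence $(V_{\Phi\pi|\pi}^{(k)})^T=V_{\pi|\Phi\pi}^{(k)}$ pins the permutation vertices. If anything, you are slightly more careful than the paper on one step --- you invoke the correct lemma that a stochastic matrix with a \emph{stochastic} inverse is a permutation, whereas the paper's phrasing ``all invertible stochastic matrices are permutations'' is literally false and only works here because $\R^{-1}=\R$ --- and you rightly flag the quotient-space ambiguity that the paper leaves implicit.
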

	
	\begin{proof}
		Thanks to the structure of $\mathcal{U}(\Phi\pi,\pi)$, one can interpret the coefficients $\{\lambda_k^{({\Phi})}\}$ as a probability vector. Thus $\R$ must map probability distributions into probability distributions, meaning that $\R$ is a stochastic matrix. Moreover, principle~(\ref{ax:6}) implies $\R^2 =\id$, meaning that $\R$ is invertible and coincides with its inverse. It should be noticed that all the invertible stochastic matrices are permutations. The involutive principle then also implies that it must be a permutation of cycle at most $2$. We can now focus on the action of $\R$ on the first $\ell$ indices. From principle~(\ref{ax:2}) we know that permutations must be mapped into their inverse, that is $U\rightarrow U^T$. Thanks to the relation between the vertices of $\mathcal{U}(\Phi\pi,\pi)$ and $\mathcal{U}(\pi,\Phi\pi)$ this corresponds to $\R$ acting as the identity on the first $\ell$ elements of  $\{\lambda_k^{({\Phi})}\}$.
	\end{proof}
	
	Since $\R$ is a stochastic matrix, it is sufficient to study its action on the vertices of the simplex of the probability vectors  $\{\lambda_k^{({\Phi})}\}$. In particular we need to check if there is any permutation of two vertices of this simplex that is admissible other \ms{than} the identity. 
	
	We focus on the action of $\R$ on single vertices. Consider in particular the case in which $\Phi :=V^{(i)}_{\Phi\pi|\pi}\mathcal{J}^{-1}_{\pi}$. Since $\R$ is a permutation, there exists a vertex  $V^{(j)}_{\pi|\Phi\pi}$ which satisfies $\tilde\Phi \equiv V^{(j)}_{\pi|\Phi\pi} \mathcal{J}^{-1}_{\Phi \pi}= (V^{(j)}_{\Phi\pi|\pi})^T \mathcal{J}^{-1}_{\Phi \pi}$. From principles~(\ref{ax:4}) and~(\ref{ax:5}) the following matrix
	\begin{equation}
		\label{eq:set-X}
		X_{i,j}=\cJ^{-\frac{1}{2}}_\pi (V^{(j)}_{\Phi\pi|\pi})^T \cJ^{-1}_{\Phi(\pi)} V^{(i)}_{\Phi\pi|\pi} \cJ^{-\frac{1}{2}}_\pi
	\end{equation}
	is positive semidefinite.
	
	At the same time, due to principle~(\ref{ax:6}), if the vertex $(V^{(j)}_{\Phi\pi|\pi})^T$ corresponds to the inverse of $V^{(i)}_{\Phi\pi|\pi}$, then $V^{(i)}_{\Phi\pi|\pi}$ must be the inverse of $(V^{(j)}_{\Phi\pi|\pi})^T$. This consideration, together with principle~(\ref{ax:4}) and~(\ref{ax:5}), then also implies that the matrix
	\begin{equation}
		\label{eq:set-Y}
		Y_{i,j}=\cJ^{-\frac{1}{2}}_{\Phi(\pi)} V^{(i)}_{\Phi\pi|\pi} \cJ^{-1}_{\pi} (V^{(j)}_{\Phi\pi|\pi})^T \cJ^{-\frac{1}{2}}_{\Phi(\pi)}
	\end{equation}
	is positive semidefinite.
	
	Since the number of vertices is finite, it is easy to explicitly verify for which set of indices Eq.~\eqref{eq:set-X} and Eq.~\eqref{eq:set-Y} are positive semidefinite. We verified \ms{this} for many possible families of stochastic maps and found that the only admissible $\R$ is the identity, meaning that principle~(\ref{ax:6}) seems to be enough to single out the Bayes reversion \js{(see Figure \ref{fig:PSDQ})}. Despite this promising result, an analytical proof of this fact is still missing. \js{In fact we miss a characterisation of the properties of the vertices for generic $\mathcal{U}(\Phi\pi,\pi)$. To the best of our knowledge, for an arbitrary pair $(\Phi\pi,\pi)$, it is not even possible to know the precise number of vertices of the set $\mathcal{U}(\Phi\pi,\pi)$ without first mechanically constructing them using the algorithm of Jurkat and Ryser~\cite{Jurkat1967}}.

	\section{Conclusions}
	In the present work, we addressed the problem of finding an optimal strategy for the retrieval of a state after the evolution induced by a physical map. We assumed to have a full characterisation of the physical map on the system (given for classical systems in terms of a left stochastic matrix, and for quantum systems in terms of a CPTP map) and we \ms{wanted} to find a physical transformation ascribable to some reverse transformation. 
	
	To this end, we postulated five physically motivated principles that all retrieval maps should satisfy:  \eqref{ax:1}~they are physical; \eqref{ax:2}~on invertible maps they give the inverse;  \eqref{ax:3}~ they perfectly retrieve a fiducial state $\pi$; \eqref{ax:4}~the transformation $\tilde \Phi \Phi$ mapping forward and backwards is detailed balanced with respect to $\pi$; and \eqref{ax:5}~the eigenvalues of $\tilde \Phi \Phi$ are positive. We showed that both the Bayes inspired reverse, in the classical case, and the Petz recovery map, in the quantum one, satisfy all these principles. 
	
	After giving a parametrisation of the maps compatible with the requirements above, we defined a retrieval to be a transformation $\R$ associating to the pair $\Phi$ and $\pi$ a state retrieval map $\tilde\Phi$. In this context, the map $\R$ corresponding to the Bayes inspired reverse and the Petz recovery takes a particularly simple form: namely, it corresponds to the identity on the coefficients parametrising the possible retrieval maps. 
	
	At this point, we proposed a maximisation principle to define the optimal state retrieval. This seems to outperform the Bayes inspired reverse, or the Petz recovery, both on average and at the level of the single state. We complement the numerical evidence supporting this fact with analytical intuitions about why this is the case.
	
	Finally, in the last section of the paper, we investigated the possibility of singling out the Bayes inspired reverse among the possible state retrievals by adding an additional principle. We propose as a candidate the following: \eqref{ax:6}~the retrieval of the retrieval is the original map. This principle is motivated by interpreting state retrieval as a generalisation of the time inversion. Despite not being able to prove that this is enough to isolate the Bayes inspired reverse, we have strong numerical suggestions supporting the claim.
	
	Apart from settling down the question whether principle~\eqref{ax:6} is enough to isolate Bayes' reversion, there are a number of subtleties in the quantum regime that we did not explore. Primarily, there is some arbitrariness in the choice of $\J_\pi$: our choice was motivated by the fact that both $\J_\pi$ and its inverse are CP~\cite{petzIntroductionQuantumFisher2011}. Unfortunately, different choices of $\J_\pi$ impose inequivalent characterisations of the detailed balance in principle~\eqref{ax:4}~~\cite{fagnolaGeneratorsKMSSymmetric2010, temmeH2divergenceMixingTimes2010}. For this reason, it will be interesting to study what role this choice has in the definition of reverse maps~\cite{tim22b}. Moreover, since the concepts of retrodiction and reverse processes increasingly seem to play a fundamental role in thermodynamics \cite{aw2021,buscemi2021,BuscemiRotondo2020,tim22a}, it would be interesting studying the role of the complete family of state retrievals.  Finally, it is not directly clear how one could extend the algorithm for the classical scenario to the quantum case. These questions need a treatment of their own and are therefore left for future research.
	
	\emph{Acknowledgements.} We are grateful to F. Buscemi for the useful comments to the first draft of the manuscript. This project has received funding from the European Union’s Horizon 2020 research and innovation programme under the Marie Skłodowska-Curie grant agreement No 713729,
	and from the Government of Spain (FIS2020-TRANQI and Severo Ochoa CEX2019-000910-S), Fundacio Cellex, Fundació Mir-Puig, Generalitat de Catalunya (SGR 1381 and CERCA Programme).

	\bibliographystyle{unsrtnat}
	\bibliography{Biblio.bib}

\begin{thebibliography}{50}
\providecommand{\natexlab}[1]{#1}
\providecommand{\url}[1]{\texttt{#1}}
\expandafter\ifx\csname urlstyle\endcsname\relax
  \providecommand{\doi}[1]{doi: #1}\else
  \providecommand{\doi}{doi: \begingroup \urlstyle{rm}\Url}\fi

\bibitem[Watanabe(1955)]{watanabe1955}
Satosi Watanabe.
\newblock Symmetry of {{Physical Laws}}. {{Part III}}. {{Prediction}} and
  {{Retrodiction}}.
\newblock \emph{Rev. Mod. Phys.}, 27\penalty0 (2):\penalty0 179--186, April
  1955.
\newblock \doi{https://doi.org/10.1103/RevModPhys.27.179}.

\bibitem[Watanabe(1965)]{watanabe1965}
Satosi Watanabe.
\newblock Conditional {{Probability}} in {{Physics}}.
\newblock \emph{Progress of Theoretical Physics Supplement}, E65:\penalty0
  135--160, January 1965.
\newblock \doi{https://doi.org/10.1143/PTPS.E65.135}.

\bibitem[Buscemi and Scarani(2021)]{buscemi2021}
Francesco Buscemi and Valerio Scarani.
\newblock Fluctuation theorems from {{Bayesian}} retrodiction.
\newblock \emph{Phys. Rev. E}, 103\penalty0 (5):\penalty0 052111, May 2021.
\newblock \doi{https://doi.org/10.1103/PhysRevE.103.052111}.

\bibitem[Aw et~al.(2021)Aw, Buscemi, and Scarani]{aw2021}
Clive~Cenxin Aw, Francesco Buscemi, and Valerio Scarani.
\newblock Fluctuation theorems with retrodiction rather than reverse processes.
\newblock \emph{AVS Quantum Science}, 3\penalty0 (4):\penalty0 045601, 2021.
\newblock \doi{https://doi.org/10.1116/5.0060893}.

\bibitem[Crooks(2008)]{crooks2008quantum}
Gavin~E. Crooks.
\newblock Quantum operation time reversal.
\newblock \emph{Physical Review A}, 77\penalty0 (3):\penalty0 034101, 2008.
\newblock \doi{https://doi.org/10.1103/PhysRevA.77.034101}.

\bibitem[Jaynes(2003)]{jaynes2003}
Edwin~T. Jaynes.
\newblock \emph{Probability {{Theory}}: {{The Logic}} of {{Science}}}.
\newblock {Cambridge University Press}, {Cambridge}, 2003.
\newblock ISBN 978-0-521-59271-0.
\newblock \doi{https://doi.org/10.1017/CBO9780511790423}.

\bibitem[Bernardo and Smith(2009)]{bernardo2009}
Jos{\'e}~M. Bernardo and Adrian F.~M. Smith.
\newblock \emph{Bayesian {{Theory}}}.
\newblock {John Wiley \& Sons}, September 2009.
\newblock ISBN 978-0-470-31771-6.
\newblock \doi{https://doi.org/10.1002/9780470316870}.

\bibitem[Asano et~al.(2012)Asano, Basieva, Khrennikov, Ohya, and
  Tanaka]{asano2012}
Masanari Asano, Irina Basieva, Andrei Khrennikov, Masanori Ohya, and Yoshiharu
  Tanaka.
\newblock Quantum-like generalization of the {{Bayesian}} updating scheme for
  objective and subjective mental uncertainties.
\newblock \emph{Journal of Mathematical Psychology}, 3\penalty0 (56):\penalty0
  166--175, 2012.
\newblock \doi{https://doi.org/10.1016/j.jmp.2012.02.003}.

\bibitem[Dezert et~al.(2018)Dezert, Tchamova, and Han]{dezert2018}
Jean Dezert, Albena Tchamova, and Deqiang Han.
\newblock Total {{Belief Theorem}} and {{Generalized Bayes}}' {{Theorem}}.
\newblock In \emph{21st {{International Conference}} on {{Information Fusion}}
  ({{Fusion}} 2018)}, {Cambridge, United Kingdom}, July 2018.
\newblock \doi{https://doi.org/10.23919/ICIF.2018.8455351}.

\bibitem[Parzygnat and Russo(2022)]{parzygnat2020}
Arthur~J. Parzygnat and Benjamin~P. Russo.
\newblock A non-commutative {Bayes'} theorem.
\newblock \emph{Linear Algebra and its Applications}, 644:\penalty0 28--94,
  2022.
\newblock \doi{https://doi.org/10.1016/j.laa.2022.02.030}.

\bibitem[Vanslette(2017)]{vanslette2017}
Kevin Vanslette.
\newblock Entropic {{Updating}} of {{Probabilities}} and {{Density Matrices}}.
\newblock \emph{Entropy}, 19\penalty0 (12):\penalty0 664, December 2017.
\newblock \doi{https://doi.org/10.3390/e19120664}.

\bibitem[Warmuth and Kuzmin(2010)]{warmuth2014}
Manfred~K Warmuth and Dima Kuzmin.
\newblock Bayesian generalized probability calculus for density matrices.
\newblock \emph{Machine learning}, 78\penalty0 (1-2):\penalty0 63, 2010.
\newblock \doi{https://doi.org/10.1007/s10994-009-5133-7}.

\bibitem[Vanslette(2018)]{vanslette2018}
Kevin Vanslette.
\newblock The quantum {{Bayes}} rule and generalizations from the quantum
  maximum entropy method.
\newblock \emph{J. Phys. Commun.}, 2\penalty0 (2):\penalty0 025017, February
  2018.
\newblock \doi{https://doi.org/10.1088/2399-6528/aaaa08}.

\bibitem[Holik et~al.(2014)Holik, Sáenz, and Plastino]{holik2013}
Federico Holik, Manuel Sáenz, and Angel Plastino.
\newblock A discussion on the origin of quantum probabilities.
\newblock \emph{Annals of Physics}, 340\penalty0 (1):\penalty0 293--310, 2014.
\newblock ISSN 0003-4916.
\newblock \doi{https://doi.org/10.1016/j.aop.2013.11.005}.

\bibitem[Fuchs and Schack(2009)]{fuchs2009}
Christopher~A. Fuchs and R{\"u}diger Schack.
\newblock Priors in {{Quantum Bayesian Inference}}.
\newblock \emph{AIP Conference Proceedings}, 1101\penalty0 (1):\penalty0
  255--259, March 2009.
\newblock \doi{https://doi.org/10.1063/1.3109948}.

\bibitem[Giffin and Caticha(2007)]{Giffin07}
Adom Giffin and Ariel Caticha.
\newblock Updating probabilities with data and moments.
\newblock \emph{AIP Conference Proceedings}, 954\penalty0 (1):\penalty0 74--84,
  2007.
\newblock \doi{https://doi.org/10.1063/1.2821302}.

\bibitem[Ali et~al.(2012)Ali, Cafaro, Giffin, Lupo, and Mancini]{Ali12}
Sean~A. Ali, Carlo Cafaro, Adom Giffin, Cosmo Lupo, and Stefano Mancini.
\newblock {On a differential geometric viewpoint of Jaynes' MaxEnt method and
  its quantum extension}.
\newblock \emph{AIP Conference Proceedings}, 1443\penalty0 (1):\penalty0
  120--128, 2012.
\newblock \doi{https://doi.org/10.1063/1.3703628}.

\bibitem[Kostecki(2014)]{kostecki2014}
Ryszard~Paweł Kostecki.
\newblock {L\"uders' and quantum Jeffrey's rules as entropic projections}.
\newblock 2014.
\newblock \doi{https://doi.org/10.48550/arXiv.1408.3502}.

\bibitem[Accardi(1978)]{Accardi78}
Luigi Accardi.
\newblock {Noncommutative Markov Chains Associated to a Preassigned Evolution:
  An Application to the Quantum Theory of Measurement}.
\newblock \emph{Adv. Math.}, 29:\penalty0 226--243, 1978.
\newblock \doi{https://doi.org/10.1016/0001-8708(78)90012-9}.

\bibitem[Accardi and Cecchini(1982)]{Accardi82}
Luigi Accardi and Carlo Cecchini.
\newblock {Conditional expectations in von Neumann algebras and a theorem of
  Takesaki}.
\newblock \emph{Journal of Functional Analysis}, 45\penalty0 (2):\penalty0
  245--273, 1982.
\newblock \doi{https://doi.org/10.1016/0022-1236(82)90022-2}.

\bibitem[Leifer(2006)]{Leifer06}
M.~S. Leifer.
\newblock Quantum dynamics as an analog of conditional probability.
\newblock \emph{Phys. Rev. A}, 74:\penalty0 042310, Oct 2006.
\newblock \doi{https://doi.org/10.1103/PhysRevA.74.042310}.

\bibitem[Coecke and Spekkens(2012)]{coecke12}
Bob Coecke and Robert~W. Spekkens.
\newblock Picturing classical and quantum {{Bayesian}} inference.
\newblock \emph{Synthese}, 186\penalty0 (3):\penalty0 651--696, June 2012.
\newblock \doi{https://doi.org/10.1007/s11229-011-9917-5}.

\bibitem[Ohya and Petz(1993)]{ohya1993}
Masanori Ohya and D{\'{e}}nes Petz.
\newblock \emph{Quantum Entropy and Its Use}.
\newblock Springer Berlin Heidelberg, 1993.
\newblock \doi{https://doi.org/10.1007/978-3-642-57997-4}.

\bibitem[Petz(1986)]{petz1986}
D{\'e}nes Petz.
\newblock Sufficient subalgebras and the relative entropy of states of a von
  {{Neumann}} algebra.
\newblock \emph{Commun.Math. Phys.}, 105\penalty0 (1):\penalty0 123--131, March
  1986.
\newblock \doi{https://doi.org/10.1007/BF01212345}.

\bibitem[Petz(1988)]{Petz:1988usv}
D{\'{e}}nes Petz.
\newblock {Sufficiency of channels over von Neumann algebras}.
\newblock \emph{Quart. J. Math. Oxford Ser.}, 39\penalty0 (1):\penalty0
  97--108, 1988.
\newblock \doi{https://doi.org/10.1093/qmath/39.1.97}.

\bibitem[Petz(2003)]{Petz:2002eql}
D{\'{e}}nes Petz.
\newblock Monotonicity of quantum relative entropy revisited.
\newblock \emph{Rev. Math. Phys.}, 15\penalty0 (01):\penalty0 79--91, 2003.
\newblock \doi{https://doi.org/10.1142/S0129055X03001576}.

\bibitem[Junge et~al.(2018)Junge, Renner, Sutter, Wilde, and Winter]{junge2018}
Marius Junge, Renato Renner, David Sutter, Mark~M. Wilde, and Andreas Winter.
\newblock Universal {{Recovery Maps}} and {{Approximate Sufficiency}} of
  {{Quantum Relative Entropy}}.
\newblock \emph{Ann. Henri Poincar\'e}, 19\penalty0 (10):\penalty0 2955--2978,
  October 2018.
\newblock \doi{https://doi.org/10.1007/s00023-018-0716-0}.

\bibitem[Jurkat and Ryser(1967)]{Jurkat1967}
Wolfgang Jurkat and Herbert~John Ryser.
\newblock Term ranks and permanents of nonnegative matrices.
\newblock \emph{Journal of Algebra}, 5:\penalty0 342--357, 1967.
\newblock \doi{https://doi.org/10.1016/0021-8693(67)90044-0}.

\bibitem[Temme et~al.(2010)Temme, Kastoryano, Ruskai, Wolf, and
  Verstraete]{temmeH2divergenceMixingTimes2010}
Kristan Temme, Michael~J. Kastoryano, M.~B. Ruskai, M.~M. Wolf, and
  F.~Verstraete.
\newblock The {$\chi^2$}-divergence and mixing times of quantum {{Markov}}
  processes.
\newblock \emph{Journal of Mathematical Physics}, 51\penalty0 (12):\penalty0
  122201, December 2010.
\newblock \doi{https://doi.org/10.1063/1.3511335}.

\bibitem[Vandenberghe et~al.(1998)Vandenberghe, Boyd, and
  Wu]{vandenbergheDeterminantMaximizationLinear1998}
Lieven Vandenberghe, Stephen Boyd, and Shao-Po Wu.
\newblock Determinant {{Maximization}} with {{Linear Matrix Inequality
  Constraints}}.
\newblock \emph{SIAM Journal on Matrix Analysis and Applications}, 19\penalty0
  (2):\penalty0 499--533, April 1998.
\newblock \doi{https://doi.org/10.1137/S0895479896303430}.

\bibitem[Grone et~al.(1984)Grone, Johnson, S{\'a}, and Wolkowicz]{Grone84}
Robert Grone, Charles~R. Johnson, Eduardo~M. S{\'a}, and Henry Wolkowicz.
\newblock Positive definite completions of partial {Hermitian} matrices.
\newblock \emph{Linear Algebra and its Applications}, 58:\penalty0 109--124,
  1984.
\newblock \doi{https://doi.org/10.1016/0024-3795(84)90207-6}.

\bibitem[Choi(1975)]{Choi1975}
Man-Duen Choi.
\newblock Completely positive linear maps on complex matrices.
\newblock \emph{Linear Algebra and its Applications}, 10\penalty0 (3):\penalty0
  285--290, 1975.
\newblock \doi{https://doi.org/10.1016/0024-3795(75)90075-0}.

\bibitem[Rudolph(2004)]{rudolphExtremalQuantumStates2004}
Oliver Rudolph.
\newblock On extremal quantum states of composite systems with fixed marginals.
\newblock \emph{Journal of Mathematical Physics}, 45\penalty0 (11):\penalty0
  4035, October 2004.
\newblock \doi{https://doi.org/10.1063/1.1776642}.

\bibitem[Fagnola and Umanit{\`a}(2010)]{fagnolaGeneratorsKMSSymmetric2010}
Franco Fagnola and Veronica Umanit{\`a}.
\newblock {Generators of KMS Symmetric Markov Semigroups on {$\mathcal{B}({\rm
  h})$} Symmetry and Quantum Detailed Balance}.
\newblock \emph{Communications in Mathematical Physics}, 298\penalty0
  (2):\penalty0 523--547, September 2010.
\newblock \doi{https://doi.org/10.1007/s00220-010-1011-1}.

\bibitem[Wolf and Cirac(2008)]{wolf2008dividing}
Michael~M Wolf and J~Ignacio Cirac.
\newblock Dividing quantum channels.
\newblock \emph{Communications in Mathematical Physics}, 279\penalty0
  (1):\penalty0 147--168, 2008.
\newblock \doi{https://doi.org/10.1007/s00220-008-0411-y}.

\bibitem[Leifer and Spekkens(2013)]{Leifer2013}
M.~S. Leifer and Robert~W. Spekkens.
\newblock Towards a formulation of quantum theory as a causally neutral theory
  of {B}ayesian inference.
\newblock \emph{Phys. Rev. A}, 88:\penalty0 052130, Nov 2013.
\newblock \doi{https://doi.org/10.1103/PhysRevA.88.052130}.

\bibitem[Lorenzo et~al.(2013)Lorenzo, Plastina, and
  Paternostro]{paternostro2013}
Salvatore Lorenzo, Francesco Plastina, and Mauro Paternostro.
\newblock Geometrical characterization of non-{Markovianity}.
\newblock \emph{Phys. Rev. A}, 88:\penalty0 020102, Aug 2013.
\newblock \doi{https://doi.org/10.1103/PhysRevA.88.020102}.

\bibitem[Buscemi and Dall’Arno(2019)]{Buscemi2019}
Francesco Buscemi and Michele Dall’Arno.
\newblock Data-driven inference of physical devices: theory and implementation.
\newblock \emph{New Journal of Physics}, 21\penalty0 (11):\penalty0 113029,
  2019.
\newblock \doi{https://doi.org/10.1088/1367-2630/ab5003}.

\bibitem[{Beth Ruskai} et~al.(2002){Beth Ruskai}, Szarek, and
  Werner]{ruskai2002}
Mary {Beth Ruskai}, Stanislaw Szarek, and Elisabeth Werner.
\newblock An analysis of completely-positive trace-preserving maps on m2.
\newblock \emph{Linear Algebra and its Applications}, 347\penalty0
  (1):\penalty0 159--187, 2002.
\newblock ISSN 0024-3795.
\newblock \doi{https://doi.org/10.1016/S0024-3795(01)00547-X}.

\bibitem[Finetti(1974)]{finetti1974}
Bruno~De Finetti.
\newblock \emph{Theory of {{Probability}}: {{A Critical Introductory
  Treatment}}}.
\newblock {Wiley}, 1974.
\newblock ISBN 978-0-471-20141-0.
\newblock \doi{https://doi.org/10.1002/9781119286387}.

\bibitem[Culbertson and Sturtz(2014)]{culbertson14}
Jared Culbertson and Kirk Sturtz.
\newblock A {{Categorical Foundation}} for {{Bayesian Probability}}.
\newblock \emph{Appl Categor Struct}, 22\penalty0 (4):\penalty0 647--662,
  August 2014.
\newblock \doi{https://doi.org/10.1007/s10485-013-9324-9}.

\bibitem[Petz and Ghinea(2011)]{petzIntroductionQuantumFisher2011}
Denes Petz and Catalin Ghinea.
\newblock Introduction to quantum {{Fisher}} information.
\newblock \emph{Quantum Probability and Related Topics}, pages 261--281,
  January 2011.
\newblock \doi{https://doi.org/10.1142/9789814338745_0015}.

\bibitem[Scandi et~al.()Scandi, Abiuso, De~Santis, and Surace]{tim22b}
Matteo Scandi, Paolo Abiuso, Dario De~Santis, and Jacopo Surace.
\newblock Quantum fisher information and its dynamical nature.
\newblock \emph{in preparation}.

\bibitem[Buscemi et~al.(2020)Buscemi, Fujiwara, Mitsui, and
  Rotondo]{BuscemiRotondo2020}
Francesco Buscemi, Daichi Fujiwara, Naoki Mitsui, and Marcello Rotondo.
\newblock Thermodynamic reverse bounds for general open quantum processes.
\newblock \emph{Phys. Rev. A}, 102:\penalty0 032210, Sep 2020.
\newblock \doi{https://doi.org/10.1103/PhysRevA.102.032210}.

\bibitem[Abiuso et~al.(2022)Abiuso, Scandi, Surace, and De~Santis]{tim22a}
Paolo Abiuso, Matteo Scandi, Jacopo Surace, and Dario De~Santis.
\newblock {Characterizing (non-) Markovianity through Fisher Information}.
\newblock 2022.
\newblock \doi{https://doi.org/10.48550/arXiv.2204.04072}.

\bibitem[Csisz{\'a}r et~al.(2004)Csisz{\'a}r, Shields,
  et~al.]{csiszar2004information}
Imre Csisz{\'a}r, Paul~C Shields, et~al.
\newblock Information theory and statistics: A tutorial.
\newblock \emph{Foundations and Trends in Communications and Information
  Theory}, 1\penalty0 (4):\penalty0 417--528, 2004.
\newblock \doi{http://doi.org/10.1561/0100000004}.

\bibitem[Lesniewski and Ruskai(1999)]{lesniewskiMonotoneRiemannianMetrics1999}
Andrew Lesniewski and Mary~Beth Ruskai.
\newblock Monotone {{Riemannian Metrics}} and {{Relative Entropy}} on
  {{Non}}-{{Commutative Probability Spaces}}.
\newblock \emph{Journal of Mathematical Physics}, 40\penalty0 (11):\penalty0
  5702--5724, November 1999.
\newblock ISSN 0022-2488, 1089-7658.
\newblock \doi{https://doi.org/10.1063/1.533053}.

\bibitem[Breuer et~al.(2002)Breuer, Petruccione, et~al.]{breuer2002theory}
Heinz-Peter Breuer, Francesco Petruccione, et~al.
\newblock \emph{The theory of open quantum systems}.
\newblock Oxford University Press on Demand, 2002.
\newblock URL \url{https://doi.org/10.1093/acprof:oso/9780199213900.001.0001}.

\bibitem[Klyachko(2004)]{Klyachko04}
Alexander Klyachko.
\newblock Quantum marginal problem and representations of the symmetric group.
\newblock 2004.
\newblock \doi{https://doi.org/10.48550/arXiv.quant-ph/0409113}.

\bibitem[Bravyi(2004)]{Bravyi04}
Sergey Bravyi.
\newblock Compatibility between local and multipartite states.
\newblock \emph{Quantum Information and Computation}, 4:\penalty0 012--026,
  2004.
\newblock \doi{https://doi.org/10.26421/QIC4.1-2}.

\end{thebibliography}
	
	\appendix

	\section{Estimation of the relative entropy}\label{app:relativeEntropyEstimate}
	We provide here a derivation of Eq.~\eqref{eq:dEstimate}. We want to bound the average relative entropy between the original distribution and the evolved one:
	\begin{align}\label{eq:averageRelativeEntropy2}
		\int_{\mathcal{S}} \de \rho\; D( \rho || \tilde\Phi\Phi ( \rho)) = \int_{\mathcal{S}} \de \rho\; \rho\cdot (\log \rho - \log \tilde\Phi\Phi ( \rho)),
	\end{align}
	where we indicate by $\mathcal{S}$ the space of states. To this end we study separately the two components of the integral in Eq.~\eqref{eq:averageRelativeEntropy2}. First, notice that the first term
	\begin{align}
		\int_{\mathcal{S}} \de \rho \; \rho\cdot \log \rho  = - \langle S(\rho)\rangle ,
	\end{align}
	does not depend on the recovery map. We indicate by $\langle S(\rho)\rangle$ the average Shannon entropy evaluated over the whole space of distributions. Since the quantity in Eq.~\eqref{eq:averageRelativeEntropy2} is non-negative we have the inequality:
	\begin{align}\label{eq:secondPart}
		-\int_{\mathcal{S}} \de \rho \; \rho\cdot \log \tilde\Phi\Phi ( \rho) \geq \langle S(\rho)\rangle,
	\end{align}
	with equality only for perfect retrieval, i.e., $\tilde\Phi\Phi \equiv \id$. Notice that the minimisation of Eq.~\eqref{eq:averageRelativeEntropy2} is equivalent to the minimisation of this last integral. We can also majorize this term as:
	\begin{align}
		&-\int_{\mathcal{S}} \de \rho \; \rho\cdot \log \tilde\Phi\Phi ( \rho) \leq -\int_{\mathcal{S}} \de \rho \; \sum_i  \log (\tilde\Phi\Phi (\rho))_i=\\
		&\qquad= -\frac{1}{|\det \tilde{\Phi}\Phi|}\int_{\tilde\Phi\Phi (\mathcal{S})} \de \sigma \; \sum_i  \log  \sigma_i\leq\\
		&\qquad\leq -\frac{1}{|\det \tilde{\Phi}\Phi|}\int_{\mathcal{S}} \de \sigma \; \sum_i  \log  \sigma_i = \frac{K}{|\det \tilde{\Phi}\Phi|},
	\end{align}
	where in the first inequality we used the fact that ${\rho_i\leq 1}$ for every $i$, in the second line we changed variables, in the third we extended the integral from the image of $\tilde\Phi\Phi $ to the whole space, and lastly we implicitly defined the positive constant $K$ to be the integral in the last line.
	
	Hence, we have that Eq.~\eqref{eq:secondPart} can be estimated as:
	\begin{align}
		\langle S(\rho)\rangle\leq	-\int_{\mathcal{S}} \de \rho\; \rho\cdot \log \tilde\Phi\Phi ( \rho) \leq\frac{K}{|\det \tilde{\Phi}\Phi|}.
	\end{align}
	Subtracting to every term the average Shannon entropy gives the result in Eq.~\eqref{eq:dEstimate}. These inequalities give a rough idea of why minimising the determinant helps minimising the average discrepancy between the initial state and the retrieved one.
	
	\section{Additional bounds on the quality of retrieval}\label{app:relEntropyEst2}
	
	In order to present the following bounds at the right level of generality we introduce the Csizár contrast functions between two classical distributions, which reads~\cite{csiszar2004information}:
	\begin{align}
		H_g(\rho||\sigma) : = \sum_i \, \rho_i \,g\norbra{\frac{\sigma_i}{\rho_i}},
	\end{align}
	where $g$ is an arbitrary positive convex function such that $g(0)=0$. The relative entropy is part of this family, which also contains many other well-known quantifiers of statistical distance such as the Hellinger distance or the $\chi^2$-divergence. Interestingly, for close-by states, all the contrast functions collapse into the same quantity, called Fisher information:
	\begin{align}
		H_g(\rho||\rho+\delta\rho) \simeq \frac{1}{2} \Tr{\delta \rho\,\cJ_{\rho}^{-1}[\delta \rho]} = \frac{1}{2} \sum_i \frac{\delta\rho_i^2}{\rho_i},
	\end{align}
	where $|\delta\rho|\ll 1$, and we used the matrix form of the equation to keep the analogy with the quantum case.
	Since we won't use any specific property of these $g$s, the reader unfamiliar with these concepts can substitute in all the classical calculations $H_g(\rho||\sigma)$ with the relative entropy. 
	
	The generalisation to the quantum regime of $H_g(\rho||\sigma)$ was given in~\cite{lesniewskiMonotoneRiemannianMetrics1999}, to which we refer for the general form. The only quantum contrast function we will consider here is given by:
	\begin{align}\label{eq:qContrastFunction}
		H_{{\rm sq}}(\rho||\sigma) = \Tr{\sqrt{\rho}(\rho-\sigma)\sqrt{\sigma^{-1}}},
	\end{align}
	corresponding to the matrix convex function $g(x)=\sqrt{x^{-1}}-\sqrt{x} $ (where we use the notation from~\cite{lesniewskiMonotoneRiemannianMetrics1999}). It is also useful to introduce the symmetrised version of this contrast function:
	\begin{align}
		H_{{\rm sq}}^{{\rm symm}}(\rho||\sigma) &= \frac{1}{2}\norbra{H_{{\rm sq}}(\rho||\sigma)+H_{{\rm sq}}(\sigma||\rho)}=\\
		&=\frac{1}{2}\Tr{\sqrt{\rho^{-1}}(\rho-\sigma)\sqrt{\sigma^{-1}}(\rho-\sigma)}\,.\label{eq:ultima}
	\end{align}
	 It is a known fact that quantum contrast functions locally behave in the same way as their symmetrised version~\cite{lesniewskiMonotoneRiemannianMetrics1999}. Hence, we can use Eq.~\eqref{eq:ultima} to find the expansion:
	\begin{align}
		H_{{\rm sq}}(\rho||\rho+\delta\rho)\simeq	H_{{\rm sq}}^{{\rm symm}}(\rho||\rho+\delta\rho) \simeq \frac{1}{2} \Tr{\delta \rho\,\J_{\rho}^{-1}[\delta \rho]},\label{eq:expansionFisher}
	\end{align}
	where $\Tr{|\delta \rho|}\ll1$.
	It should be noticed that the assumption~(\ref{ax:q4}) makes $\tilde\Phi\Phi$ self-adjoint with respect to the scalar product induced by $\J_\pi$. In fact, one has:
	\begin{align}
		&\Tr{A\,\J_\pi^{-1}[\tilde\Phi\Phi(B)]} = \Tr{\J_\pi[(\tilde\Phi\Phi)^\dagger(\J_\pi^{-1}[A])]\,\J_\pi^{-1}[B]}=\nonumber\\
		&=\Tr{\tilde\Phi\Phi(A)\,\J_\pi^{-1}[B]},
	\end{align}
	where the calculations can be easily verified by using the explicit expression of $\J_\pi[\rho] =\sqrt{\pi}\rho\sqrt{\pi}$. The same calculations carry out to the classical case by substituting $\J_\pi$ with $\cJ_\pi$ (in fact, this holds for all the calculations in this section). 
	
	Thanks to the self-adjointness and adjoint-preserving properties of $\tilde\Phi\Phi$ we can find an orthonormal basis of self-adjoint operators $\{E_i\}$  such that $\tilde\Phi\Phi[E_i] := \varphi_i \,E_i$ and $\Tr{E_i\,\J_\pi^{-1}[E_j]} = \delta_{i,j}$. Moreover, due to principle~(\ref{ax:q5}), all the $\varphi_i$ are positive and less than one (since $\tilde\Phi\Phi$ is a CP-map) meaning that we can express them as $\varphi_i = e^{-\lambda_i}$, where $\lambda_i$ are all positive. Notice that the $\lambda$s are also connected to the determinant of the forth-and-back map by the relation:
	\begin{align}
		\log \det\tilde\Phi\Phi  = \Tr{\log\tilde\Phi\Phi } = - \sum_i \lambda_i\,.
	\end{align}
	Consider now the contraction rate close to a state $\rho$:
	\begin{align}
		\eta^F_g(\rho,\delta\rho) :=\frac{H_g(\rho||	\rho+\delta\rho)-H_g(\tilde\Phi\Phi(\rho)||\tilde\Phi\Phi(\rho+\delta\rho))}{H_g(\rho||	\rho+\delta\rho)} ,\label{eq:52}
	\end{align}
	where $H_g(\rho||\sigma)$ is a generic contrast function in the classical case, or the one in Eq.~\eqref{eq:qContrastFunction} in the quantum one. Then, thanks to the expansion in Eq.~\eqref{eq:expansionFisher} and the self-adjointness of $\tilde\Phi \Phi$, we can rewrite the contraction rate close to the prior $\pi$ as:
	\begin{align}
		\eta^F_g(\pi,\delta\rho) &\simeq\frac{\Tr{\delta\rho\,\J_\pi^{-1}[\delta\rho]} - \Tr{\delta\rho\,\J_\pi^{-1}[(\tilde\Phi\Phi)^2(\delta\rho)]} }{\Tr{\delta\rho\,\J_\pi^{-1}[\delta\rho]}} =\\
		&=\frac{\sum_i |\delta\rho_i|^2 (1-e^{-2\lambda_i})}{\sum_i |\delta\rho_i|^2},
	\end{align}
	where $\delta\rho_i:= \Tr{\delta\rho\,\J_\pi^{-1}[E_i]}$ are the components of $\delta\rho$ in the eigenbasis of $\tilde\Phi\Phi$.
	
	Then, thanks to the inequality $1-e^{-x}\leq x$ holding for positive $x$ (with equality only for $x\equiv 0$), we can bound the contraction rate by:
	\begin{align}
		\eta^F_g(\pi,\delta\rho)&\leq 2\,\frac{\sum_i |\delta\rho_i|^2 \lambda_i}{\sum_i |\delta\rho_i|^2}\leq\\
		&\leq2\sum_i \lambda_i = 2 \log\det(\tilde\Phi\Phi)^{-1} \,.
	\end{align}
	
	We are now ready to prove the first result of this Appendix. To this end, we introduce contraction coefficients akin to the one used in~\cite{lesniewskiMonotoneRiemannianMetrics1999}:
	\begin{align}
		\eta_g^{RE} &:= \inf_{\rho,\sigma}\, \frac{H_g(\rho||	\sigma)-H_g(\tilde\Phi\Phi(\rho)||\tilde\Phi\Phi(\sigma))}{H_g(\rho||	\sigma)}\,;\\
		\eta_g^{F} &:= \inf_{\rho,\delta\rho}\, \eta^F_g(\rho,\delta\rho)\,.
	\end{align}
	In there, it was shown that $\eta_g^{RE}\leq\eta_g^{F}$.  Hence, we have the following chain of inequalities
	\begin{align}
		\eta_g^{RE}\leq\eta_g^{F}\leq \eta^F_g (\pi,\delta\rho)\leq 2 \log\det(\tilde\Phi\Phi)^{-1} ,
	\end{align} 
	where the second inequality follows from the fact that the infimum is always smaller than the value of the function at a specific point. This 
	proves Eq.~\eqref{eq:20} for general contrast functions and so, in particular, also for the relative entropy.
	
	The proof of Eq.~\eqref{eq:19} is completely analogous: rewrite the quantity in consideration as:
	\begin{align}
			H_g(\pi +\delta\rho ||&\tilde\Phi\Phi(\pi +\delta\rho )) \simeq\\
			\simeq& \frac{1}{2}\Tr{\delta\rho\,\J_\pi^{-1}[(\id-\tilde\Phi\Phi)^2(\delta\rho)]}=\\
			&=\frac{1}{2}\sum_i |\delta\rho_i|^2 (1-e^{-\lambda_i})^2.
	\end{align}
	Then, thanks to the inequality $(1-e^{-x})^2\leq x/2$ holding for positive $x$, we directly have:
	\begin{align}
		H_g(\pi +\delta\rho ||\tilde\Phi&\Phi(\pi +\delta\rho )) \leq \frac{1}{4}\sum_i |\delta\rho_i|^2 \lambda_i\leq\\
		&\leq \frac{H_g(\pi  ||\pi +\delta\rho ) }{2} \log\det(\tilde\Phi\Phi)^{-1}.
	\end{align}
	Again specialising to the relative entropy for classical distributions finally gives Eq.~\eqref{eq:19}. 
	
	It is important to keep in mind that these computations only hold in the quantum case for the contrast function in Eq.~\eqref{eq:qContrastFunction}. Still, if principle~(\ref{ax:q4}) gets modified with the requirement that $\tilde\Phi\Phi$ satisfies the canonical definition of detailed balance (i.e., the one given, for example, in~\cite{breuer2002theory}) all the steps can be generalised to any quantum contrast function. This extension is straightforward, but involves a number of technical details outside of the scope of the present publication. For this reason, we defer its treatment to subsequent works~\cite{tim22b}.
	
	\begin{figure}
		\centering
		\includegraphics[width=0.9\linewidth]{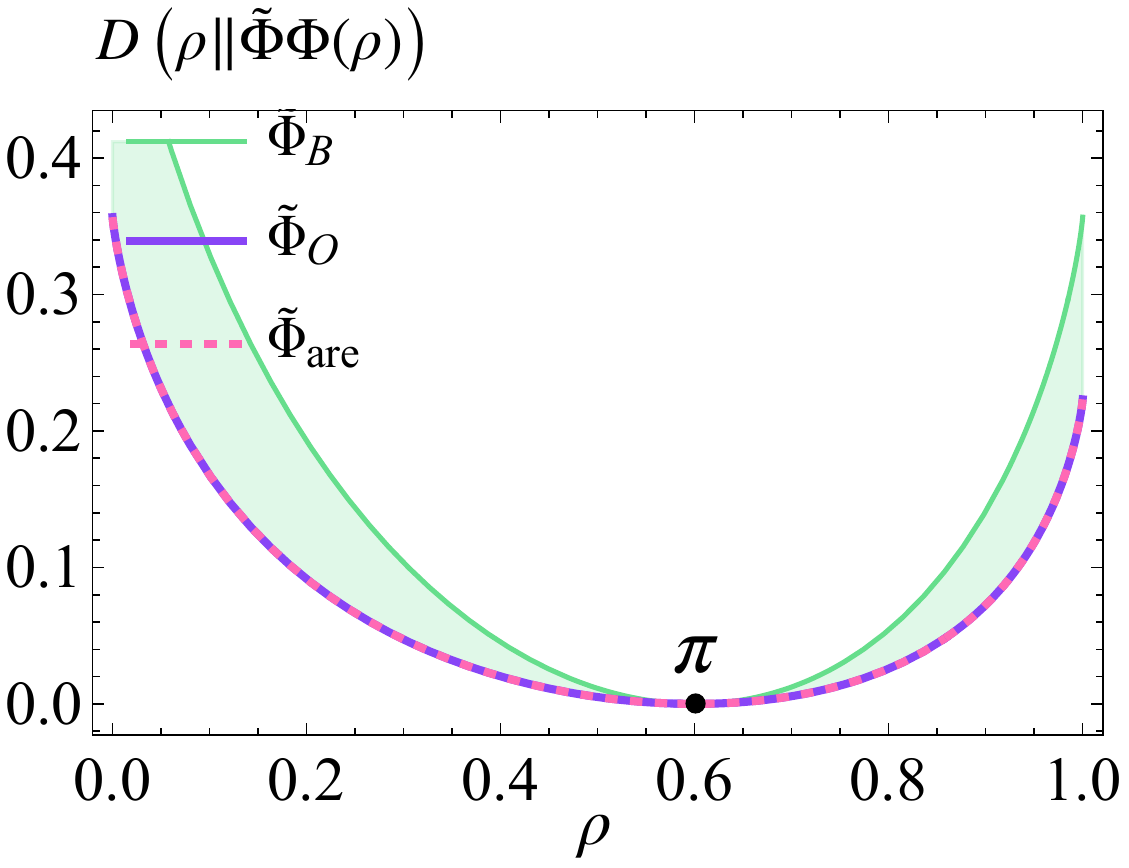}
		\caption{\js{Relative entropy between a distribution and its evolution forwards and backwards. We consider probability vectors $\rho=[\rho,1-\rho]$. As it can be seen, choosing the prior $\pi$ as the fixed point of $\tilde{\Phi}\Phi$, the optimal map $\tilde\Phi_O$ coincides with $\tilde{\Phi}_{are}$ and both outperform the Bayes retrodiction $\tilde\Phi_B$ in retrieving the original distribution in the whole space.}}
		\label{fig:Are-Punto-Fisso}
	\end{figure}
	
	\js{
		\section{Comparison with minimisation of the average relative entropy}
		In section~\ref{sec:optimalStateRetrieval} we used the relative entropy of recovery to evaluate and compare the quality of the retrieval of the single state for the optimal and the Bayes' inspired retrieval map. Given a specific state $\rho$, a map $\Phi$ and a retrieval map $\tilde{\Phi}$, the quality of the retrieved state $\tilde{\Phi}\Phi(\rho)$ is higher the smaller its relative entropy of retrieval $D\left(\rho\|\tilde{\Phi}\Phi(\rho) \right)$ is. In the examples of Figure~\ref{fig:figcomposed} the optimal retrieval map outperforms the Bayes' inspired retrieval map. A natural question in this context is how the optimal map compares with the retrieval map $\tilde{\Phi}_{are}$ obtained by directly minimising the average relative entropy on every input state. In Figure~\ref{fig:Are-Punto-Fisso} and Figure~\ref{fig:Are-Punto-Random} we consider the same example of Figure~\ref{fig:figcomposed} and compute $\tilde{\Phi}_{are}$ as the stochastic map that minimises the relative entropy of recovery on average on every input state
		\begin{equation}
		\tilde{\Phi}_{are}=\argmin_{\substack{\tilde \Phi \,{\rm stochastic}}} \int d\rho D\left(\rho \|\tilde{\Phi}\Phi(\rho) \right).
		\end{equation}
Note that $\tilde{\Phi}_{are}$ does not depend on the choice of the prior $\pi$, but it only depends on the map $\Phi$.
	In Figure~\ref{fig:Are-Punto-Fisso} we see that choosing the prior of the optimal retrieval map as the fixed point of $\tilde{\Phi}_{are}\Phi$, the optimal map $\tilde\Phi_O$ coincides with $\tilde{\Phi}_{are}$.
	}

	\section{Constraints on the Choi state of CPTP maps with a given transition}\label{app:Choi}
	In this section we show how to formulate the constraints in Eq.~\eqref{eq:TPConstraint} and Eq.~\eqref{eq:ax3Constraint} in terms of the Choi state of $\Lambda^\Psi$.
	
	\begin{figure}
		\centering
		\includegraphics[width=0.9\linewidth]{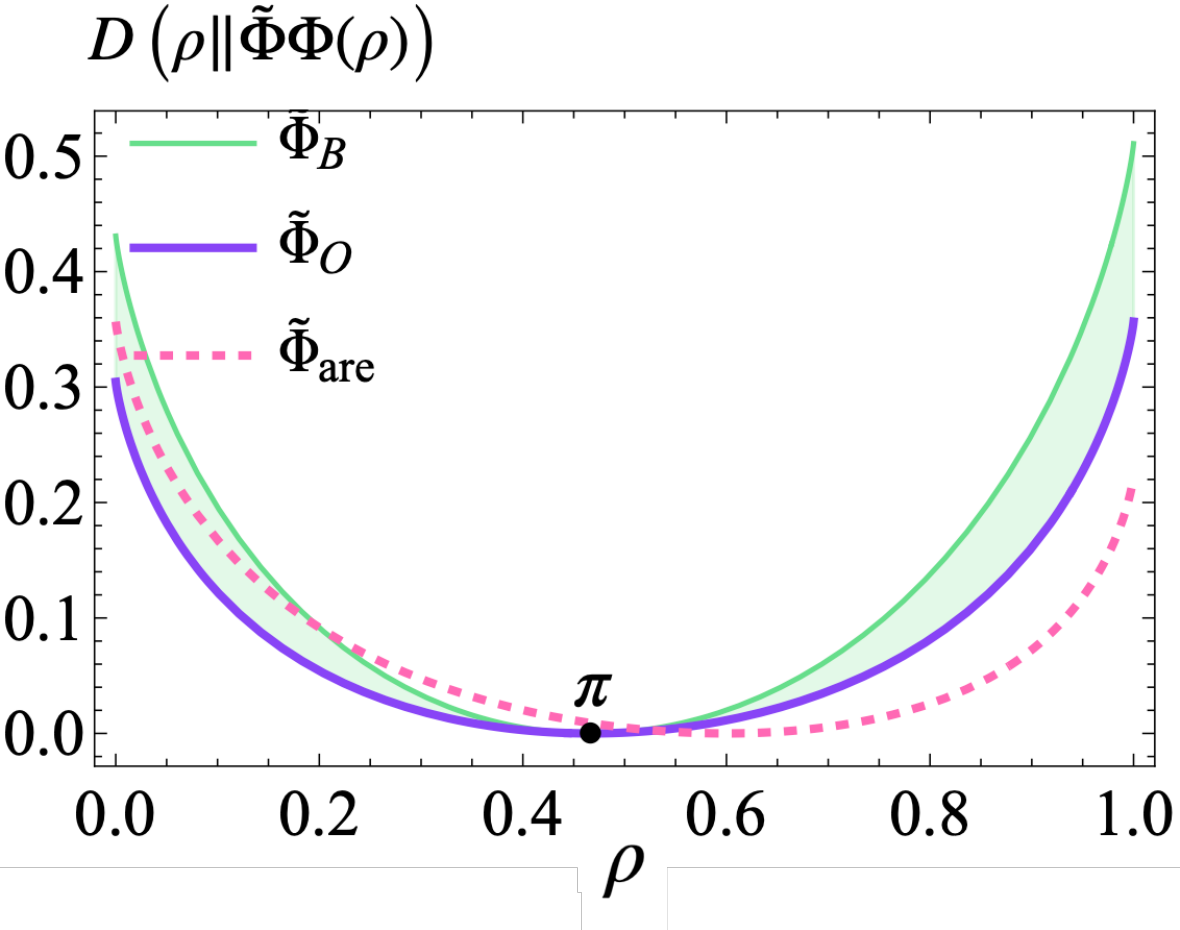}
		\caption{\js{Relative entropy between a distribution and its evolution forwards and backwards. We consider probability vectors $\rho=[\rho,1-\rho]$. As it can be seen, choosing the prior $\pi$ as a different point with respect to the fixed point of $\tilde{\Phi}_{are}\Phi$, the effect of the optimal map $\tilde\Phi_O$ differs from the one of $\tilde{\Phi}_{are}$.}}
		\label{fig:Are-Punto-Random}
	\end{figure}
	
	Consider the maximally entangled state:
	\begin{align}
		\ket{\Omega} := \frac{1}{\sqrt{d}}\sum_{i=0}^{d-1} \; \ket{i}_A\otimes\ket{i}_B.
	\end{align}
	The unnormalised Choi state of the map $\Psi$ is defined by the formula $\mathcal{C}^{\Psi}:=d\,{(\ids_A\otimes\Psi) [\ket{\Omega}\bra{\Omega}]}$. Then, the application of $\Psi$ to a state $\rho$ can be equivalently expressed  as $\Psi[\rho] = \tr_A\sqrbra{(\rho^T\otimes\ids)\,\mathcal{C}^{\Psi}}$. Moreover, it also holds that $\Psi^\dagger[\rho] = (\tr_B\sqrbra{(\ids\otimes\rho)\,\mathcal{C}^{\Psi}})^T$. Finally, the Choi--Jamiołkowski isomorphism states that a map $\Psi$ is completely positive if and only if the Choi state $\mathcal{C}^\Psi$ is positive definite.
	
	We can pass to characterise $\mathcal{U}_Q(\sigma,\pi)$ in terms of the corresponding Choi states. Consider a map $\Lambda\in\mathcal{U}_Q(\sigma,\pi)$. Eq.~\eqref{eq:TPConstraint} translates to:
	\begin{align}
		\tr_B\sqrbra{\mathcal{C}^{\Lambda}} &= \tr_B\sqrbra{(\ids\otimes\ids)\mathcal{C}^{\Lambda}}= (\Lambda^\dagger[\ids])^T=(\pi)^T.
	\end{align}
	Moreover, it is also follows that Eq.~\eqref{eq:ax3Constraint} translates to:
	\begin{align}
		\tr_A\sqrbra{\mathcal{C}^\Lambda} =\tr_A\sqrbra{(\ids\otimes\ids)\mathcal{C}^{\Lambda}} =  (\Lambda)[\ids] = \sigma.
	\end{align}
	Since $\mathcal{C}^\Lambda$ is positive semidefinite and $\tr[\mathcal{C}^\Lambda]=1$, the set $\mathcal{U}_Q(\sigma,\pi)$, thanks to the Choi–Jamiołkowski isomorphism, is isomorphic to the set of all the bipartite quantum states $\rho_{AB}$ compatible with the two marginals $\rho_A=\sigma$ and $\rho_B=(\pi)^T$.
	
	This identification allows to constrain the spectrum of the Choi states in $\mathcal{U}_Q(\sigma,\pi)$. In fact, one can construct a system of linear inequalities depending on the spectrum of  $\rho_A$ and $\rho_B$ to constrain the spectrum of $\rho_{AB}$~\cite{Klyachko04,Bravyi04}. Moreover, similarly with what happened for the classical case, one can use the spectrum of $\rho_{AB}$ to associate to a set of scalars a map in $\mathcal{U}_Q(\sigma,\pi)$. Differently from the classical case, though, this association is not unique: in fact, the symmetry in Eq.~\eqref{eq:unit-invariance} preserves the spectrum of the Choi matrix, so we can only associate to each set of scalars a unique equivalence class, but not a unique map.


\end{document}